\newcommand{\bp}{\bm{\mathrm{p}}}
\DeclareMathOperator{\sign}{sgn}
\newcommand{\R}{\mathbb{R}}
\newtheorem{theorem}{Theorem}
\newtheorem{lemma}{Lemma}
\newtheorem{corollary}{Corollary}
\newtheorem{problem}{Problem}
\newtheorem{definition}{Definition}
\newenvironment{proof}{{\em Proof:}}{\hfill \hspace*{1pt} \hfill  $\blacksquare$}
\newenvironment{proofof}{{\em Proof of}}{\hfill \hspace*{1pt} \hfill  $\blacksquare$}
\newtheorem{remark}{Remark}
\title{Prying Pedestrian Surveillance-Evasion: Minimum-Time Evasion from an Agile Pursuer}
\author{
 Philipp Braun \\
  Australian National University\\
  Canberra, Australia \\
  \texttt{philipp.braun@anu.edu.au} \\
   \And
 Timothy L. Molloy \\
  Australian National University\\
  Canberra, Australia \\
  \texttt{timothy.molloy@anu.edu.au} \\
  \And
 Iman Shames \\
  Australian National University\\
  Canberra, Australia \\
  \texttt{iman.shames@anu.edu.au} \\
}
\begin{document}
\maketitle
\begin{abstract}
A new surveillance-evasion differential game is posed and solved in which an agile pursuer (the prying pedestrian) seeks to remain within a given surveillance range of a less agile evader that aims to escape.
In contrast to previous surveillance-evasion games, the pursuer is agile in the sense of being able to instantaneously change the direction of its velocity vector, whilst the evader is constrained to have a finite maximum turn rate.
Both the game of kind concerned with conditions under which the evader can escape, and the game of degree concerned with the evader seeking to minimize the escape time whilst the pursuer seeks to maximize it, are considered.
The game-of-degree solution is surprisingly complex compared to solutions to analogous pursuit-evasion games with an agile pursuer since it exhibits dependence on the ratio of the pursuer's speed to the evader's speed.
It is, however, surprisingly simple compared to solutions to classic surveillance-evasion games with a turn-limited pursuer.
\end{abstract}

\section{Introduction}

The widespread and accelerating adoption of agile vehicles, such as quadrotor drones, has motivated a plethora of new guidance problems. 
Particular interest and progress has centered on formulating and solving guidance problems as novel pursuit-evasion differential games with agile (or omnidirectional) pursuers capable of instantaneously altering their velocity vectors \cite{Exarchos2014,Exarchos2015,Exarchos2016,Molloy2020, Ruiz2013, Jha2019}.
For instance, a \emph{suicidal pedestrian} differential game was posed and solved in \cite{Exarchos2015,Exarchos2014,Exarchos2016} by reversing the maneuverability constraints of the pursuer and evader in the classic \emph{homicidal chauffeur} differential game \cite{Isaacs65,Merz1971} so that the pursuer is agile and the evader has a finite maximum turn rate. 
Similarly, \cite{Molloy2020} posed and solved a variation of a classic collision-avoidance game between two (noncooperative) ships \cite{Miloh1976,Olsder1978,Merz1973} in which one of the ships is agile.
These new games offer practical advantages by more accurately reflecting the (worst-case) maneuverability of agile vehicles. 
They also have remarkably simple solutions (i.e., Nash equilibrium trajectories) that can be fully characterized, unlike many classic differential games (cf.\ \cite{Merz1971,Isaacs65,Weintraub2020,Miloh1976,Olsder1978,Merz1973}). 
Surprisingly, however, agile pursuers have yet to be considered in \emph{surveillance-evasion} differential games, which are historic counterparts to pursuit-evasion games.

The first surveillance-evasion differential game in the open literature was introduced by Dobbie \cite{Dobbie1966} (see also \cite{Taylor1970}). 
He posed and solved a \emph{game of kind} examining conditions under which an agile evader can escape from a circular surveillance region centered on a turn-limited pursuer.
Lewin and Breakwell \cite{Lewin1975} subsequently posed and solved a \emph{game-of-degree} version in which the agile evader (resp.\ turn-limited pursuer) aims to minimize (resp. maximize) the time to escape from the circular surveillance region.
The game-of-degree solution was found to be ``surprisingly complex'' with a variety of singular arcs dividing the game space into regions where the pursuer and evader adopt different strategies (cf.\ \cite{Lewin1975}).
Variations of these original surveillance-evasion games have subsequently been examined, including with a conical surveillance region \cite{Lewin1979}; with both the evader and pursuer being turn-limited \cite{Greenfeld1987}; with the combination of a conical surveillance region and a turn-limited evader and pursuer \cite{Ruiz2024}; and, with the pursuer being a differential-drive robot whilst the evader is agile \cite{Ruiz2022,Saavedra2024}.
The closest existing work has come to considering an agile pursuer and a turn-limited evader appears to be the isotropic rocket surveillance-evasion game of \cite{Lewin1989}.
In \cite{Lewin1989}, whilst the evader is not turn-limited (it is agile in the sense of being able to instantaneously change its velocity), the pursuer is able to instantaneously change its acceleration (but not its velocity).
No works therefore appear to consider a surveillance-evasion game with an agile pursuer and a turn-limited evader.

Most recently, there has been growing interest in formulating and solving surveillance-evasion problems as optimal-control problems (or single-player differential games) \cite{Molloy2023,Weintraub2024,Weintraub2023,Weintraub2021,VonMoll2023}.
In these formulations, the strategy of only one of the evader or pursuer is optimized, while the other’s strategy remains fixed. 
For example, \cite{Molloy2023,Weintraub2024} consider the problem of controlling a turn-limited evader to enable it to escape from a stationary circular region in minimum time, which can be viewed as fixing the pursuer to be stationary and optimizing over evader strategies.
Similarly, \cite{VonMoll2023} considers the problem of controlling an evader to minimize its time under surveillance by a pursuer that employs a pure pursuit strategy.
Conversely, \cite{Weintraub2021,Weintraub2023} consider the problem of controlling a turn-limited pursuer to maximize the time that a constant-velocity (non-maneuvering) evader is under surveillance.
Interestingly, however, connections between many of these optimal control problems and analogous game-theoretic formulations are undeveloped.

The key contribution of this paper is the formulation and solution of a novel prying pedestrian surveillance-evasion differential game between an agile pursuer (the ``prying pedestrian'') that seeks to keep a turn-rate limited evader within a circular surveillance region, whilst the evader seeks to escape.
This game resembles the emergent problem of controlling a wheeled ground vehicle (e.g., a car) to escape surveillance by an agile aerial vehicle with a circular sensor footprint (e.g., a quadcopter drone at maximum altitude with a downward-facing camera).
It also extends the recent single-player or optimal-control surveillance problems of \cite{Molloy2023,Weintraub2024,VonMoll2023} to a differential game setting.
Finally, the solution of this game offers insight into the performance attainable by any type of pursuer since an agile pursuer can employ the surveillance strategies of any (less agile) pursuer.

We solve both the game of kind and game of degree forms of the prying pedestrian surveillance-evasion differential game.
The game of kind solution establishes conditions under which the evader can escape from the pursuer's detection region (or the pursuer can surveil the evader indefinitely), whilst the game of degree solution considers the time the evader is within the surveillance region as the payoff and establishes optimal (i.e., Nash equilibrium) strategies for both the evader and pursuer to minimize and maximize the payoff, respectively.
The game of kind solution is intuitively simple, with escape always being possible when the evader is faster than the pursuer (provided that the evader simply continues in a straight line), but impossible otherwise (provided that the pursuer employs a simple strategy to remain within range of the evader).
The game-of-degree solution is surprisingly complex, with the ratio of the pursuer's speed to the evader's speed determining the nature of several singular arcs that divide the game space into regions in which the evader either turns away from the pursuer or moves in a straight line, whilst the pursuer moves along piecewise-linear trajectories.
Interestingly, the solutions to comparable pursuit-evasion and collision-avoidance games with agile pursuers do not exhibit this speed-ratio dependence (cf.\ \cite{Exarchos2015} and \cite{Molloy2020}).
Nevertheless, the solution of the prying pedestrian surveillance-evasion game is simpler than that of the original surveillance-evasion game with a turn-limited pursuer (cf.\ \cite{Dobbie1966,Lewin1975}).

The paper is organized as follows.
In Section \ref{sec:problem}, we formulate the prying pedestrian surveillance-evasion games of kind and degree, and solve the game of kind.
In Section \ref{sec:solution_game_of_degree}, we solve the game of degree.
In Section \ref{sec:analysis_of_solution}, we illustrate and discuss properties of the solution to the game of degree, including its specialization to (single-player) minimum-time circle escape solutions and its degree of optimality over alternative approaches such as pure pursuit.
Finally, we present conclusions in Section \ref{sec:conclusion}.

\emph{Notation:} For $\xi\in \R^n$, $|\xi|=\sqrt{\xi^\top \xi}$ denotes the Euclidean norm. For $\rho>0$, the sphere of radius $\rho$ centered around the origin is denoted by $\mathcal{B}_\rho =\{\xi\in \R^n| \ |\xi|\leq \rho\}$. For a continuously differentiable function $f:\mathbb{R}^n \rightarrow \mathbb{R}$, the gradient is denoted by $\nabla f:\mathbb{R}^n \rightarrow \mathbb{R}^n$. Similarly, for $f:\mathbb{R}^n \times \mathbb{R}^m \rightarrow \mathbb{R}$ where $(\xi_1,\xi_2)\mapsto f(\xi_1,\xi_2)$ the gradient restricted to $\xi_1$ is denoted by $\nabla_{\xi_1} f:\mathbb{R}^n \times \mathbb{R}^m \rightarrow \mathbb{R}^n$. The set-valued sign function $\sign:\R \rightrightarrows \R$ satisfies
\begin{align}
    \sign(x) = \left\{ \begin{array}{rl}
       \{-1\}  & \text{if } x<0  \\
        \left[-1,1\right] & \text{if } x=0 \\
        \{1\} & \text{if } x>0.
    \end{array} \right.
\end{align}
by definition, i.e., $\sign(x)$ is uniquely defined for $x\neq 0$ and $\sign(0)$ can attain any value in the interval $[-1,1]$. 
Time is denoted through the parameter $t\in\R$.  
Additionally, for $T \geq 0$ we use 
\begin{align}
    \tau= T-t
\end{align}
to denote time in the backwards direction. The derivatives of a differentiable function $\xi:\R \rightarrow \R^n$ with respect to $t\in \R_{\geq 0}$ 
and $\tau\in \R_{\geq 0}$ are denoted by $\frac{\mathrm{d}}{\mathrm{d}t}\xi(t) = \dot{\xi}(t)$ and $\frac{\mathrm{d}}{\mathrm{d}\tau}\xi(\tau) = \mathring{\xi}(\tau)$, respectively.

\section{Problem Formulation}
\label{sec:problem}

In this section, we introduce the surveillance-evasion game of interest, and describe its solution as both a \emph{game of kind} and a \emph{game of degree}.
Consider an evader and a pursuer moving in the two-dimensional Euclidean plane.
The evader maintains a constant speed $v_e > 0$ and its position $\xi_e = [x_e,y_e]^\top \in \mathbb{R}^2$ and heading angle $\theta_e \in (-\pi,\pi]$ evolve according to the unicycle (or Dubins car) kinematic equations
\begin{align}
   \label{eq:cartesianDynamics}
   \begin{aligned}
   \dot{x}_e(t)
   &= v_e \sin \theta_e(t)\\
   \dot{y}_e(t)
   &= v_e \cos \theta_e(t)\\
   \dot{\theta}_e(t)
   &= \omega_e u_e(t).
   \end{aligned}
\end{align}
Here, $\omega_e > 0$ is the evader's finite maximum turn rate in radians per second.
The evader's control input is its normalized turn rate 
$u_e(t) \in [-1,1]$ for all $t\in \R_{\geq 0}$.
Here, $u_e(t) \in (0,1]$ corresponds to right-hand turns, $u_e(t) \in [-1, 0)$ corresponds to left-hand turns, and $u_e(t) = 0$ corresponds to no turning motion.

The pursuer similarly moves with a constant speed $v_p \geq 0$ but its position $\xi_p = [x_p,y_p]^\top \in \mathbb{R}^2$ and heading $\theta_p \in (-\pi,\pi]$ are described by the (agile) kinematic equations
\begin{align}
   \label{eq:cartesianDynamicsPursuer}
   \begin{aligned}
   \dot{x}_p(t)
   &= v_p \sin \theta_p(t)\\
   \dot{y}_p(t)
   &= v_p \cos \theta_p(t).
   \end{aligned}
\end{align}
The pursuer is agile in the sense that it is capable of instantaneously changing its heading angle $\theta_p(t)$ (i.e. it has an infinite maximum turn rate).
Thus, $\theta_p(t) \in (-\pi, \pi]$, $t\in \R_{\geq 0}$, is the pursuer's control input.

To pose the prying pedestrian surveillance-evasion game, 
we define the distance between
the evader and pursuer as 
\begin{align}
    r
    & = |\xi_e - \xi_p| = \sqrt{(x_e - x_p)^2 + (y_e - y_p)^2}, \label{eq:def_r}
\end{align}
which is a function of time $r(\cdot):\R_{\geq 0} \rightarrow \R_{\geq 0}$ since the positions $\xi_e(\cdot)$ and $\xi_p(\cdot)$ expressed in the global reference frame change with time $t\geq 0$.
The pursuer and evader initially start within a given \emph{surveillance 
distance}
$\rho > 0$ of  
each other, i.e., $r(0) \leq \rho$.
The objective of the evader is to ``escape'' from the pursuer by increasing 
the range $r(t)$ beyond the surveillance range $r(t)> \rho$  
(i.e. to achieve $r(T) = \rho$ and $\dot{r}(T) > 0$ simultaneously at some time $T > 0$).
Conversely, the objective of the pursuer is to stay within surveillance range of the evader (i.e. to keep $r(t) \leq \rho$ for all $t \geq 0$).
In the context of this problem formulation, the \emph{game of kind} (i.e.\ whether the evader can escape) and the \emph{game of degree} (i.e.\ how quickly the evader can escape) \cite[Section 2.5]{Isaacs65}, can be considered. 
While the game of kind has a straightforward solution (see Section \ref{sec:game_of_kind}, in the following), a derivation of the solution of the game of degree is more involved and is the main focus of this paper. The game of degree is formulated in Section 
\ref{sec:game_of_degree_intro}
and a solution is presented in
Section \ref{sec:solution_game_of_degree}.
We shall simplify the derivations and discussion of these solutions by using a coordinate system centered on the evader.

\subsection{Evader-Centric Coordinate System}
Let us define a coordinate system with its origin fixed on the (moving) evader (i.e., the origin is at $\xi_e$); its $y$-axis aligned with the evader's heading $\theta_e$; and, its $x$-axis orientated at an angle of $\pi/2$ radians clockwise from the positive $y$-axis (as shown in Fig.~\ref{fig:fig1}).
\begin{figure}[t!]
    \centering
    \begin{overpic}[width = 0.5\columnwidth]{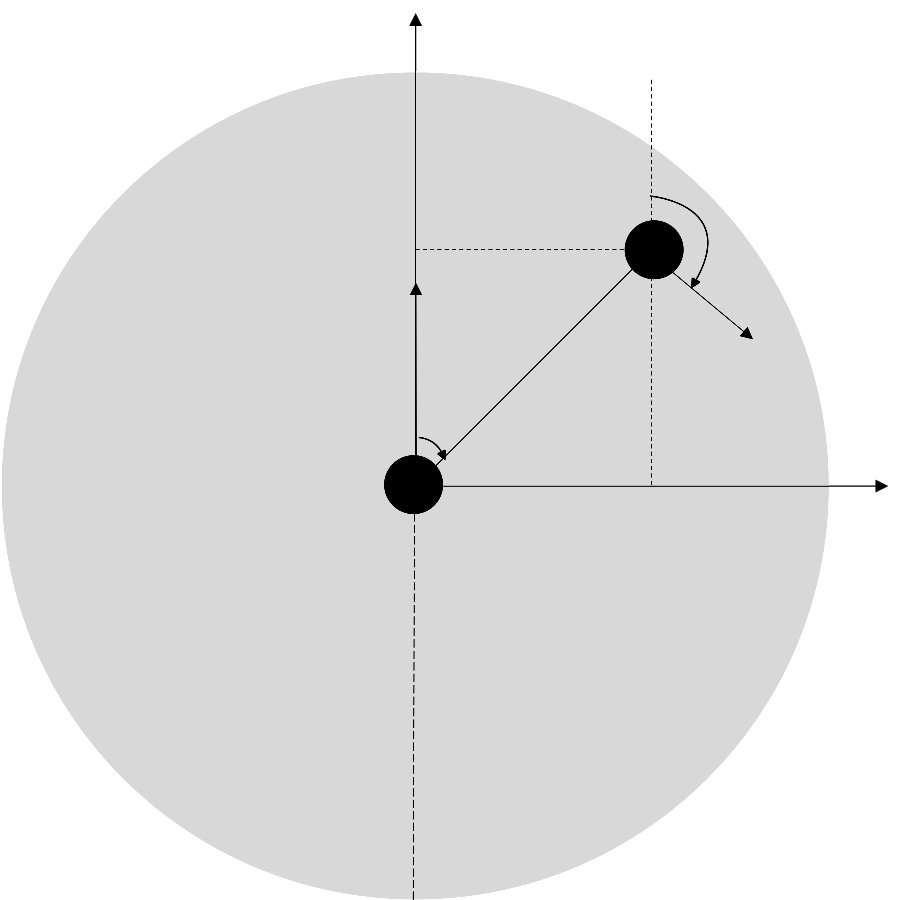}
    \put(30,45){\small{Evader}}
    \put(55,75){\small{Pursuer}}
    \put(42,20){\small{$\rho$}}
    \put(42,72){\small{$y$}}
    \put(42,62){\small{$v_e$}}
    \put(48,53){\small{$\theta$}}
    \put(57,59){\small{$r$}}
    \put(72,42){\small{$x$}}
    \put(83,64){\small{$v_p$}}
    \put(80,77){\small{$\theta_p-\theta_e$}}
    \end{overpic}
    \caption{Coordinate system attached to the evader.}
    \label{fig:fig1}
\end{figure}
The transformation between the original (inertial) coordinates and this new evader-centric coordinate system is thus
\begin{align}
    \xi=\left[\begin{array}{c}
         x \\
         y
 \end{array} \right] = \left[\begin{array}{rr}
     \cos\theta_e & -\sin\theta_e  \\
     \sin\theta_e & \cos\theta_e 
 \end{array} \right] \left[ 
\begin{array}{c}
     x_p -x_e \\
     y_p -y_e
\end{array}
 \right] \in \mathbb{R}^2. \label{eq:coordinate_transformation_xi}
\end{align}
In addition,
since the pursuer can instantaneously change its heading angle $\theta_p(t)$,  we consider a second coordinate transformation in the pursuer's degree of freedom and define its control input to be the relative heading angle 
\begin{align}
u_p(t) = \theta_p(t) - \theta_e(t) \in (-\pi,\pi],  \label{eq:input_u_p}
\end{align}
which it can also instantaneously change.
With this choice of coordinate system and pursuer control, the state of the differential game is the relative position of the pursuer, i.e. $\xi(t)$, which via \eqref{eq:cartesianDynamics} and \eqref{eq:cartesianDynamicsPursuer} evolves according to the dynamics 
\begin{align}
    \label{eq:dynamics}
    \dot{\xi}(t)
    &= f(\xi(t), u_e(t), u_p(t))
\end{align}
for $t \geq 0$, where
\begin{align}
    f(\xi, u_e, u_p)
    &=
    \begin{bmatrix}
        -\omega_e y u_e + v_p \sin u_p\\
        \omega_e x u_e - v_e + v_p \cos u_p
    \end{bmatrix}. \label{eq:f_dynamics}
\end{align}
A derivation of the dynamics \eqref{eq:dynamics} is provided in Appendix \ref{sec:coordinate_transformation} for completeness. We
note that, according to \eqref{eq:def_r}, the distance between the evader and the pursuer in the $\xi$-coordinates \eqref{eq:coordinate_transformation_xi} simplifies to 
\begin{align}
    r &= |\xi| = \sqrt{x^2 + y^2}. \label{eq:def_r2}
\end{align}
Alternatively, the components of the state $\xi \in \R^2$ can be represented in polar coordinates with distance $r\geq 0$ and angle $\theta \in (-\pi,\pi]$ through the coordinate transformation
\begin{align}
    x=r \sin(\theta) \quad \text{ and } \quad y=r \cos(\theta). \label{eq:sin_cos_transform}
\end{align}
Finally, to describe the motion of the pursuer either forward or backward in time, we shall denote solutions of \eqref{eq:dynamics} with respect to initial condition $\xi_0\in \R^2$, at initial (forward) time $t_0 \in \R_{\geq 0}$ or at initial (backward) time $ \tau_0 $, for inputs $u_e,u_{e_\tau}:\R_{\geq 0} \rightarrow [-1,1]$, $u_p,u_{p_\tau}:\R_{\geq 0} \rightarrow (-\pi,\pi]$, as
\begin{align}
\left[ \begin{smallmatrix}
        x(\cdot) \\
        y(\cdot)
    \end{smallmatrix} \right] = 
    \xi(\cdot) = \xi_t(\cdot;t_0,\xi_0,u_e(\cdot),u_p(\cdot)) \qquad \text{ and } \qquad 
    \left[ \begin{smallmatrix}
        x_\tau(\cdot) \\
        y_\tau(\cdot)
    \end{smallmatrix} \right] =
    \xi_\tau(\cdot) = \xi_\tau(\cdot;\tau_0,\xi_0,u_{e_\tau}(\cdot),u_{p_\tau}(\cdot)), \label{eq:solution_notation}
\end{align}
respectively.

\begin{remark} \label{rem:continuity_of_solutions}
Note that the function $f$ given by \eqref{eq:f_dynamics} is locally Lipschitz continuous with respect to $(\xi,u_e,u_p)$, and thus for piecewise Lipschitz continuous functions $u_e(\cdot)$, $u_p(\cdot)$, solutions $\xi(\cdot)$, $\xi_\tau(\cdot)$ in \eqref{eq:solution_notation} are absolutely continuous functions and satisfy \eqref{eq:dynamics} for almost all $t,\tau\in [0,\infty)$. 
\hfill $\diamond$
\end{remark}

\subsection{Game of Kind} \label{sec:game_of_kind}

The solution to the game of kind, i.e., whether the evader can escape the pursuer, can be characterized solely based on the pursuer's speed $v_p$ and the evader's speeds $v_e$.
Specifically, when the pursuer is faster than the evader (i.e., when $v_p > v_e$), the evader will never be able to escape provided that the pursuer employs an appropriate strategy.
To see this, consider a point $\xi \in \R^2$ with $|\xi|=\rho$ and polar coordinates $x=\rho \sin(\theta)$, $y=\rho \cos(\theta)$ 
for $\theta \in (-\pi,\pi]$ (cf.\ \eqref{eq:sin_cos_transform}).
Then, 
the input $u_p=\pi+\theta$ guarantees
\begin{align}
\begin{split}
      f(\xi,u_e,u_p)^\top \xi&=
    (-\omega_e y u_e + v_p \sin u_p)x+
(\omega_e x u_e - v_e + v_p \cos u_p)y \\
 &= v_p (x \sin(u_p) +y\cos(u_p) ) - y v_e \\
 &= v_p (\rho \sin(\theta) \sin(\pi+\theta) +\rho \cos(\theta)\cos(\pi+\theta) ) - \rho \cos(\theta) v_e \\
 & \leq \rho (- v_p  +  v_e)  < 0.
 \end{split} \label{eq:decrease_game_kind}
\end{align}
In other words, $f(\xi,u_e,u_p)$ points inside the sphere $\mathcal{B}_\rho$ as illustrated in Figure \ref{fig:increase_decrease}.
\begin{figure}[t!]
    \centering
    \begin{overpic}[width = 0.4\columnwidth]{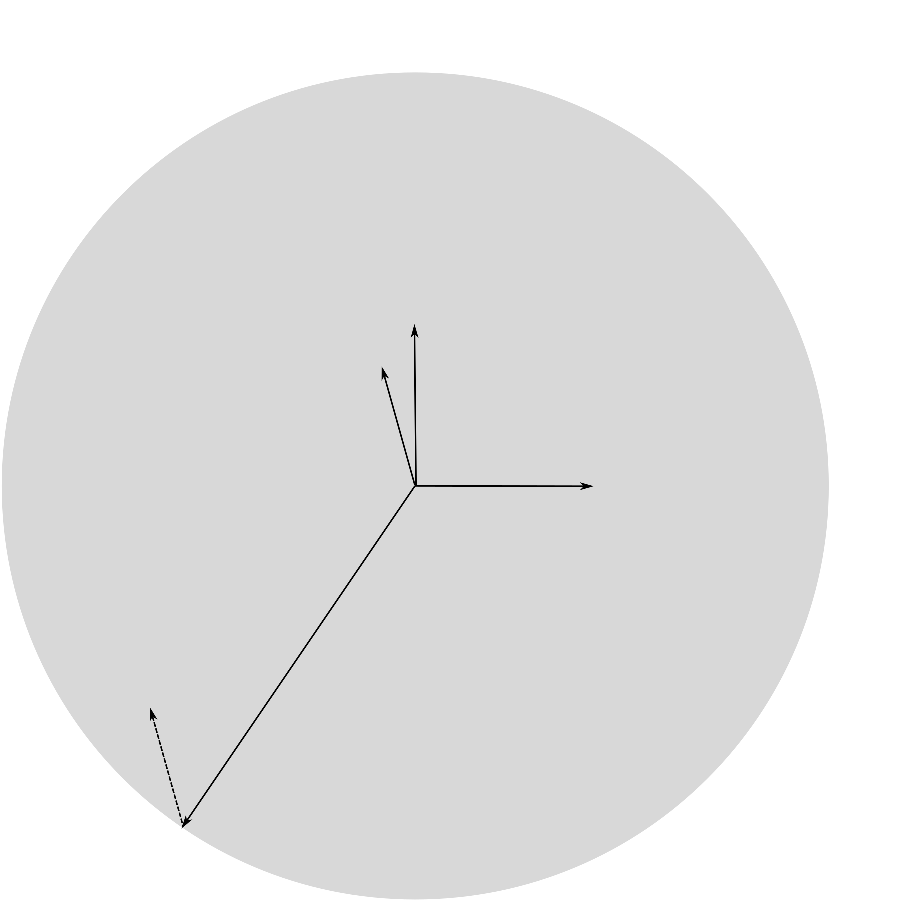}
    \put(64,42){\small{$x$}}
    \put(48,60){\small{$y$}}
    \put(32,20){\small{$\xi$}}
    \put(18,50){\small{$f(\xi,u_e,u_p)$}}
    \put(-7,13){\small{$f(\xi,u_e,u_p)$}}
    \end{overpic}
    \hspace{0.3cm}
    \begin{overpic}[width = 0.4\columnwidth]{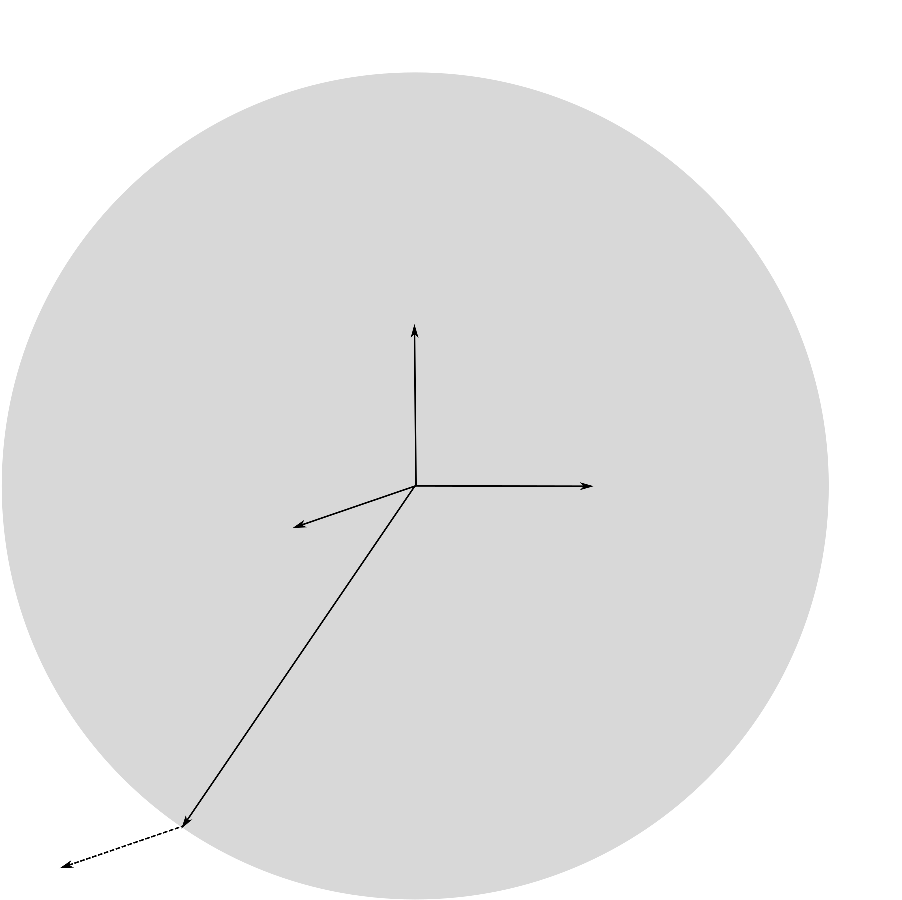}
    \put(64,42){\small{$x$}}
    \put(48,60){\small{$y$}}
    \put(32,20){\small{$\xi$}}
    \put(12,45){\small{$f(\xi,u_e,u_p)$}}
    \put(-10,10){\small{$f(\xi,u_e,u_p)$}}
    \end{overpic}
    \caption{Illustration of the decrease condition in \eqref{eq:decrease_game_kind}. On the left, $\xi$ and $f(\xi,u_e,u_p)$ point in opposite directions and thus $f(\xi,u_e,u_p)^\top \xi < 0$. On the right, $\xi$ and $f(\xi,u_e,u_p)$ point in the same direction and thus  $f(\xi,u_e,u_p)^\top \xi > 0$.}
    \label{fig:increase_decrease}
\end{figure}
Thus, a faster pursuer can keep a slower evader under surveillance indefinitely by simply selecting a piecewise-constant feedback law $u_p(t)=\pi+\bar{\theta}$ with $\bar{\theta}$ updated to $\theta(t)$ whenever the (relative) position $\xi(t)$  
nears the boundary of the surveillance region (i.e., whenever $|\xi(t)|=\rho$). 
This pursuer strategy guarantees that $|\xi(t)|\leq \rho$ for all $t \in \R_{\geq 0}$.
When the pursuer and evader have the same speed (i.e. when $v_e = v_p$), the same strategy guarantees that $\tfrac{\mathrm{d}}{\mathrm{d}t}|\xi(t)| \leq 0$ whenever $|\xi(t)| = \rho$ and thus, the same conclusion can be drawn.
When the pursuer is slower than the evader (i.e. when $v_p < v_e$), the evader can escape by simply maintaining its initial heading (i.e. by selecting $u_e(t) = 0$ for all $t \geq 0$). In this case, the maximum time that the evader will take to escape is $T = 2\rho/(v_e - v_p)$ and corresponds to the pursuer initially being in front of the evader with $r(0) = \rho$ and selecting $\theta_p(t)$ so as to always point in the same direction as the evader. 

\begin{remark} \label{eq:differentiability_issue}
    Note that in the case that $\xi(\cdot)$ is continuously differentiable in $t\in \R_{\geq 0}$, \eqref{eq:decrease_game_kind} implies that
\begin{align}
    \tfrac{1}{2} \tfrac{\mathrm{d}}{\mathrm{d}t}|\xi(t)|^2 = f(\xi(t),u_e(t),u_p(t))^\top \xi(t) \label{eq:d_dt_norm_xi}
\end{align}
is negative, i.e., $|\xi(t)|$ is strictly decreasing in a neighborhood around $t\in \R_{\geq 0}$. To deal with the fact that $\xi(\cdot)$ as an absolutely continuous function is not necessarily continuously differentiable, we have avoided the derivative in the derivations \eqref{eq:decrease_game_kind} and instead rely on a pointwise condition with illustration in Figure \ref{fig:increase_decrease}.  
For simplicity of notation, we will use the derivative in the following for all $t\in \R_{\geq 0}$ despite the fact that the derivative might not be well-defined at isolated times $t\in \R_{\geq 0}$.
\hfill $\diamond$
\end{remark}

\subsection{Game of Degree} \label{sec:game_of_degree_intro}

In the game of degree, the evader seeks to minimize the time it takes to escape whilst the pursuer seeks to maximize it.
Our analysis of the game of kind suggests that the game of degree is degenerate when the pursuer is at least as fast as the evader (i.e. when $v_p \geq v_e$) since a slower evader can never escape in finite time provided that the faster pursuer employs a sensible strategy (such as that we previously identified).
In the remainder of this paper, we shall therefore consider the game of degree under the assumption that the evader is faster than the pursuer (i.e.\ when $v_p < v_e$).
We shall specifically treat the game of degree as a two-player zero-sum differential game (cf.\ \cite{Isaacs65} or \cite[Chapter 8]{Basar1999}).

In the game of degree, the time $T > 0$ at which the evader escapes by achieving 
$|\xi(T)| = \rho$ and $\frac{\mathrm{d}}{\mathrm{d}t}|\xi(T)|^2 >0$
is the game's payoff.
The value function defining the game of degree (and the optimal time-to-go in the game) from a given state $\xi_0=[x_0,y_0]^\top$ is thus
\begin{align}
\begin{split}
V(\xi_0) &= \min_{u_e} \max_{u_p} \int_0^T 1 \, \mathrm{d}t 
\end{split} \label{eq:problem_V}
\end{align}
where the minimization is over evader control functions 
$u_e : [0,T] \rightarrow [-1,1]$ and the maximization is over pursuer control functions $u_p : [0,T] \rightarrow (-\pi,\pi]$. 
Accordingly, the game of degree can be summarized as the following problem.

\begin{problem}[Game of Degree] \label{prob:game_of_degree}
Consider the dynamics \eqref{eq:dynamics} with state $\xi\in \R^2$, inputs $u_e\in [-1,1]$ and $u_p\in (-\pi,\pi]$, and defined through parameters $\omega_e,v_e,v_p\in \R_{>0}$ with $v_p<v_e$.
Additionally, consider running costs and terminal costs defined as
\begin{align}
    L(\xi,u_e,u_p)= 1 \quad \text{and} \quad G(\xi) = 0, \label{eq:defLG}
\end{align}
respectively.
The game of degree with surveillance radius $\rho>0$ is the optimization problem
\begin{align}
\begin{split}
V(\xi_0) = \min_{u_e} \max_{u_p} & \int_0^T L(\xi(t),u_e(t),u_p(t)) \, \mathrm{d}t + G(\xi(T)) \\
\text{subject to } \quad & \dot{\xi}(t) = f(\xi(t), u_e(t), u_p(t)), \quad \xi(0) = \xi_0, \\
& u_e(t) \in [-1,1], \quad t \in [0,T],\\
& u_p(t) \in \mathbb{R}, \quad t \in [0,T],\\
& |\xi(0)| \leq   \rho, \quad |\xi(T)| = \rho, \\
& f(\xi(T),u_e(T),u_p(T))^\top \xi(T) > 0.
\end{split} \label{eq:problem_constraints}
\end{align}

\vspace{-0.8cm}

\flushright $\triangleleft$
\end{problem}

We denote optimal evader and pursuer strategies solving \eqref{eq:problem_constraints} as $u_e^*:\R_{\geq 0} \rightarrow [-1,1]$ and $u_p^*:\R_{\geq 0} \rightarrow (-\pi,\pi]$, respectively, in the following. 
Similarly, the minimal time at which the game ends is denoted by $T^*\in \R_{\geq 0}$.
Under the assumption that $\xi(\cdot)$ is continuously differentiable, the condition $ f(\xi(T),u_e(T),u_p(T))^\top \xi(T)>0$ can be replaced by $\tfrac{\mathrm{d}}{\mathrm{d}t}|\xi(T)|^2 > 0$. However, here we use the slightly more general notation.

\begin{remark}
    To avoid discontinuities in the input $u_p$ when the angle crosses $\pi$, we use $u_p : \R_{\geq 0} \rightarrow \mathbb{R}$ in the following. The original input  
    can always be recovered through $[(u_p-\pi)\mod 2\pi]+\pi \in (-\pi,\pi]$. 
    \hfill $\diamond$
\end{remark}

A solution of Problem \ref{prob:game_of_degree} is derived in the following section. 
As it turns out, the solution of the game of degree is more difficult to characterize for certain speed ratios $\frac{v_p}{v_e} \in [0,1)$.
As illustrated in Figure \ref{fig:solution_game_degree_intro_new}, depending on the speed ratio $\frac{v_p}{v_e} \in [0,1)$, the solution to the game of degree can consist of two or three regions.
\begin{figure}[htb!]
    \centering
    \begin{overpic}[width = 0.45\columnwidth]{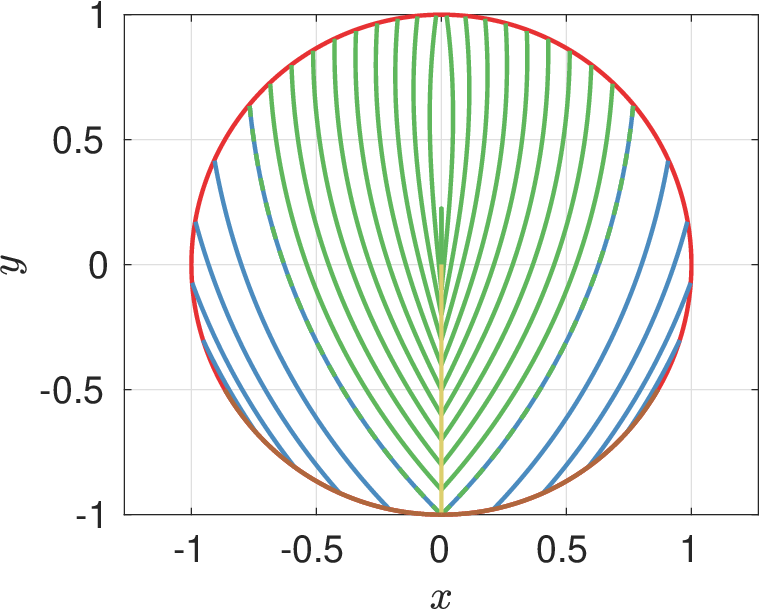}
    \put(30,40){\large{(A)}}
    \put(40,50){\large{(B)}}
    \end{overpic} 
    \hspace{0.4cm}
    \begin{overpic}[width = 0.45\columnwidth]{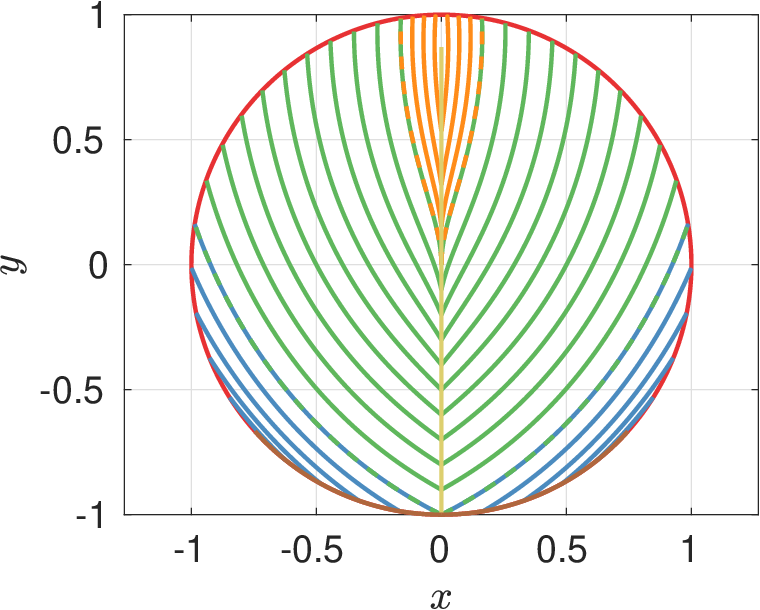}
    \put(30,25){\large{(A)}}
    \put(40,50){\large{(B)}}
    \put(55,70){\large{(C)}}
    \end{overpic}
    \caption{Trajectories corresponding to solutions of the game of degree for different parameter selections. On the left, the parameters $v_p=1$, $v_e=2$, $\rho=1$ and $\omega_e=1$ are used. On the right, $v_e$ is replaced by $v_e=1.5$.}
    \label{fig:solution_game_degree_intro_new}
\end{figure}
As will be shown in Section \ref{sec:weird_part}, for $\frac{v_p}{v_e}\leq \frac{1}{2}$, the solution can be divided in two regions, denoted (A) and (B) in Figure \ref{fig:solution_game_degree_intro_new} on the left. In the case that $\frac{v_p}{v_e}> \frac{1}{2}$, three regions (A), (B), and (C) are necessary as indicated in Figure \ref{fig:solution_game_degree_intro_new} on the right. 
In this paper, we provide a detailed derivation of the solution to the game of degree in all regions (A), (B) and (C).

\section{Game of Degree Solution} \label{sec:solution_game_of_degree}

In this section, we derive a solution of the game of degree specified in Problem \ref{prob:game_of_degree}. 
The solution is divided in several components corresponding to the domains shown in Figure \ref{fig:solution_game_degree_intro_new}.

\subsection{Preparatory Results and Definitions}

We start the analysis of Problem \ref{prob:game_of_degree} by introducing definitions and by deriving preparatory results. The definitions follow the notations in \cite{Lewin2012}. 
The \emph{game set} of the game of degree \eqref{prob:game_of_degree}, i.e., the domain of the state space where the game is defined, is given by
\begin{align}
\begin{split}
    \mathcal{S} &= \{ \xi \in \R^2 | \ x^2 + y^2 \leq \rho^2 \} \subset \R^2.
\end{split} \label{eq:S}
\end{align}
The \emph{target set} of the game, i.e., the set where the game can potentially end, is given by
\begin{align}
    \mathcal{C} \hspace{-0.05cm} &= \hspace{-0.05cm} \{ \xi \in \mathcal{S} |  x^2 + y^2 = \rho^2 \} \subset \mathcal{S}.  \label{eq:C}
\end{align}
According to \eqref{eq:problem_constraints} the game of degree ends if the conditions $|\xi(T)| = \rho$ (i.e., $\xi(T) \in \mathcal{C}$) and 
$f(\xi(T),u_e(T),u_p(T))^\top \xi(T) > 0$
are satisfied. Under the assumptions of optimal play, the game can only be terminated by the evader by leaving the game set $\mathcal{S}$. 
To take the condition 
$f(\xi(T),u_e(T),u_p(T))^\top \xi(T) > 0$
on the set $\mathcal{C}$ into account, the target set is partitioned into the \emph{usable part} (UP), the \emph{nonusable part} (NUP) and the \emph{boundary of the usable part} (BUP).
We establish the meaning of, and expressions for, these partitions in terms of the surveillance radius $\rho$ and the speed ratio $\tfrac{v_p}{v_e}$ in the following lemma (using the approach of \cite[Ch. 5.2.1]{Lewin2012}).

\begin{lemma}[Usable and Nonusable parts] \label{lem:usable_part}
Consider Problem \ref{prob:game_of_degree}. 
Under optimal play (of the evader and the pursuer), the game of degree \eqref{eq:problem_constraints} can only end on the usable part defined through the set
\begin{align}
    \text{UP} &= \left\{ \xi \in \mathcal{C} | y < - \rho \tfrac{v_p}{v_e}  \right\} \label{eq:UP1}.
    \end{align}
    Conversely, under optimal play, the game cannot end on the nonusable part and the boundary of the usable part given by
    \begin{align}
    \text{NUP} &= \left\{ \xi \in \mathcal{C} | y > - \rho \tfrac{v_p}{v_e}  \right\} \qquad \text{and} \qquad     \text{BUP} = \left\{ \xi \in \mathcal{C} | y = - \rho \tfrac{v_p}{v_e}  \right\}, \label{eq:BUP1}
\end{align}
respectively.
\hfill  $\lrcorner$
\end{lemma}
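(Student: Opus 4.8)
The plan is to reduce the entire partition of $\mathcal{C}$ to a pointwise analysis of the radial velocity $f(\xi,u_e,u_p)^\top\xi$ on the target set, exploiting the structure of the dynamics \eqref{eq:f_dynamics}. First I would compute this inner product directly and observe that the two terms carrying the evader input $u_e$, namely $(-\omega_e y u_e)\,x$ and $(\omega_e x u_e)\,y$, cancel exactly, leaving
\[
f(\xi,u_e,u_p)^\top\xi = v_p\bigl(x\sin u_p + y\cos u_p\bigr) - v_e y,
\]
which is independent of $u_e$. This is the decisive structural fact: at the boundary the evader's turn rate cannot influence whether the range is instantaneously increasing, so the radial dynamics on $\mathcal{C}$ are governed entirely by the pursuer's heading $u_p$ and the coordinate $y$.

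Second, I would recall that termination requires both $\xi(T)\in\mathcal{C}$ and the \emph{strict} inequality $f(\xi(T),u_e(T),u_p(T))^\top\xi(T) > 0$ from \eqref{eq:problem_constraints}, and interpret the usable part as the set of target points at which the range is strictly increasing for every admissible pursuer response, so that escape is unavoidable. Consistent with the min--max structure in which the time-maximizing pursuer resists escape by minimizing $f^\top\xi$, I would compute $\min_{u_p} f(\xi,u_e,u_p)^\top\xi$ over $\xi\in\mathcal{C}$. Using $x^2+y^2=\rho^2$ (equivalently the polar substitution \eqref{eq:sin_cos_transform}, giving $x\sin u_p + y\cos u_p = \rho\cos(u_p-\theta)$), the bracketed term attains its minimum value $-\rho$ at $u_p = \theta+\pi$, recovering exactly the game-of-kind heading of \eqref{eq:decrease_game_kind}, so that
\[
\min_{u_p} f(\xi,u_e,u_p)^\top\xi = -v_p\rho - v_e y.
\]

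Third, I would classify each $\xi\in\mathcal{C}$ by the sign of this minimal radial velocity. If $-v_p\rho - v_e y > 0$, equivalently $y < -\rho\,\tfrac{v_p}{v_e}$, then even the pursuer's best response leaves $f^\top\xi>0$, so the evader escapes irrespective of $u_p$; these points form the usable part \eqref{eq:UP1}. If $-v_p\rho - v_e y < 0$, equivalently $y > -\rho\,\tfrac{v_p}{v_e}$, the pursuer can enforce $f^\top\xi<0$, driving the state strictly back into the interior of $\mathcal{B}_\rho$, so the strict escape condition fails and the game cannot terminate there, giving the nonusable part. On $y = -\rho\,\tfrac{v_p}{v_e}$ the minimal radial velocity is exactly zero, so the pursuer can hold $f^\top\xi=0$; the strict inequality $f^\top\xi>0$ then cannot be met and termination is again prevented, identifying the boundary of the usable part \eqref{eq:BUP1}.

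The step I expect to be most delicate is the second one: arguing rigorously that the pointwise minimization over $u_p$ is the correct surrogate for ``optimal play'' at the target, i.e.\ that a maximizing pursuer, upon reaching the boundary, indeed selects the heading $u_p=\theta+\pi$ that minimizes the radial velocity, and that this is precisely what excludes the NUP and BUP from being terminal. The clean resolution rests on the $u_e$-independence established in the first step, which removes any competing leverage from the evader and collapses the min--max termination condition to a single-sided pursuer minimization; I would make this reduction explicit and emphasize that the \emph{strictness} of the inequality in \eqref{eq:problem_constraints} is exactly what relegates the zero-velocity set to the BUP rather than the UP.
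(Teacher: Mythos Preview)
Your proposal is correct and follows essentially the same route as the paper: compute $f(\xi,u_e,u_p)^\top\xi$ on $\mathcal{C}$, observe that the $u_e$-terms cancel, minimize the remaining expression over $u_p$ via the polar parametrization to obtain $-v_p\rho - v_e y$, and read off the partition from the sign of this quantity. The paper's argument differs only in presentation (it writes the polar substitution first and then identifies the minimizing heading $u_p=\pi+\theta_\xi$), and your emphasis on the $u_e$-independence collapsing the min--max to a one-sided minimization is exactly the mechanism the paper exploits.
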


\begin{proof}
For the game of degree to end, the conditions 
$|\xi| = \rho$ and  $\min_{u_e\in [-1,1]} \max_{u_p \in\R} f(\xi,u_e,u_p)^\top \xi>0$ need to be satisfied. 
To examine when these hold, consider any point $\xi$ on the boundary with angle $\theta_{\xi} \in (-\pi,\pi]$ such that
$x=\rho \sin(\theta_{\xi})$ and $y = \rho  \cos(\theta_{\xi})$ (using the representation \eqref{eq:sin_cos_transform}).
It holds that
\begin{align}
\begin{split}
    f(\xi,u_e,u_p)^\top \xi
 &=   (-\omega_e y u_e + v_p \sin u_p)x+ (\omega_e x u_e - v_e + v_p \cos u_p)y \\
 &= v_p (x \sin(u_p) +y\cos(u_p) ) - y v_e \\
 &= v_p (\rho \sin(\theta_\xi) \sin(u_p) +\rho \cos(\theta_\xi)\cos(u_p) ) - \rho \cos(\theta_\xi) v_e.
\end{split} \label{eq:UP_calculation}
\end{align}
The last expression is notably independent of the evader's input $u_e$.
The best the pursuer can do to keep the game going at this point (i.e., to ensure that 
$f(\xi,u_e,u_p)^\top \xi\leq 0$)
is to select
\begin{align}
    u_p = \pi +\theta_\xi, \label{eq:best_up}
\end{align}
which further leads to
\begin{align}
\begin{split}
     f(\xi,u_e,u_p)^\top \xi& =  v_p (\rho \sin(\theta_\xi) \sin(u_p) +\rho \cos(\theta_\xi)\cos(u_p) ) - \rho \cos(\theta_\xi) v_e \\ 
    &\geq v_p (\rho \sin(\theta_\xi) \sin(\theta_\xi+\pi) +\rho \cos(\theta_\xi)\cos(\theta_\xi+\pi) ) - \rho \cos(\theta_\xi) v_e \\
    &= -v_p(\rho \sin^2(\theta_\xi)  +\rho \cos^2(\theta_\xi)) - \rho \cos(\theta_\xi) v_e \\
    &= -\rho v_p  - \rho \cos(\theta_\xi) v_e = -\rho v_p  - y(0) v_e.
\end{split} \label{eq:proof_equations_UB}
\end{align}
Thus, for $y\leq -\rho \frac{v_p}{v_e}$ it holds that 
$f(\xi,u_e,\pi+\theta_\xi)^\top \xi>0$
which leads to the characterization of the usable part. Similarly, the conditions 
$f(\xi,u_e,\pi+\theta_\xi)^\top \xi<0$
and 
$f(\xi,u_e,\pi+\theta_\xi)^\top \xi=0$ for all $u_e\in [-1,1]$
lead to the nonusable and the boundary of the usable part. 
\end{proof}

Consider Figure \ref{fig:increase_decrease}. The usable part, the boundary of the usable part, and the nonusable part divide the set $\mathcal{C}$ into regions where $f(\xi,u_e,\pi+\theta_\xi)$, independent of the selection $u_e\in [-1,1]$, points outside of $\mathcal{S}$, is tangential to the boundary of $\mathcal{S}$, and points inside $\mathcal{S}$, respectively.
Here, as in the proof of Lemma \ref{lem:usable_part},
$\theta_{\xi} \in (-\pi,\pi]$ is defined through the conditions
$x=\rho \sin(\theta_{\xi})$ and $y = \rho  \cos(\theta_{\xi})$.

\begin{remark} \label{rem:u_p_star}
Note that for
$x=r \sin(\theta)$, $y=r \cos(\theta)$, for $r>0$, the angle $\pi+ \theta$ corresponds to the pursuer pointing to the evader. This is not only valid on $\mathcal{C}$, as used in  \eqref{eq:best_up}, but for all $\xi \in \mathcal{S}$. 
\hfill $\diamond$
\end{remark}

\begin{remark} \label{rem:u_e_independence}
Since the evader's input disappears in  \eqref{eq:proof_equations_UB}, it can be shown that the pursuer can guarantee that $f(\xi,u_e,u_p)^\top \xi = 0$ for all $\xi \in \mathcal{C}\cap \text{UP}$ by appropriately selecting $u_p\in (-\pi,\pi]$ independent of the selection $u_e\in [-1,1]$. Thus, the pursuer can keep a constant distance to the evader until the usable part is reached.  \hfill $\diamond$
\end{remark}

The next lemma describes a candidate optimal evader strategy.
We shall use this result later to prove that the evader's optimal strategy is, indeed, given by $u_e^*(\xi) \in -\sign(x)$.

\begin{lemma} \label{lem:y_decrease}
Consider Problem \ref{prob:game_of_degree} and assume that the evader uses the strategy
\begin{align}
u_e^\#(t)    \in -\sign(x(t)), \qquad \forall \ \xi(t)  = \left[\begin{smallmatrix}
    x(t) \\
    y(t)
\end{smallmatrix} \right]\in \mathcal{S}. \label{eq:evader_opt_strat}
\end{align}
Moreover, let $u_p(\cdot):\R_{\geq 0} \rightarrow (-\pi,\pi]$ denote an arbitrary piecewise Lipschitz continuous strategy of the pursuer. 
Then, independent of the pursuer's strategy, the $y$-component of the solution
$y(t)=y_t(t;\xi_0,u_e^\#(t),u_p(t))$ is strictly monotonically decreasing and 
\begin{align}
    y(t)-y(0) \leq t\cdot (v_p-v_e)
\end{align}
for all $t\in \R_{\geq 0}$ such that $\xi(t;\xi_0,u_e^\#(t),u_p(t)) \in \mathcal{S}$.
\hfill $\lrcorner$
\end{lemma}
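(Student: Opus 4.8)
The plan is to work directly with the $y$-component of the dynamics \eqref{eq:f_dynamics} and to show that the evader's strategy \eqref{eq:evader_opt_strat} forces a uniform strictly negative upper bound on $\dot y$, which then integrates to the claimed inequality. Reading off the second component of $f$, the $y$-dynamics are $\dot y(t) = \omega_e x(t) u_e^\#(t) - v_e + v_p \cos u_p(t)$, which holds for almost every $t$ since $\xi(\cdot)$ is absolutely continuous by Remark \ref{rem:continuity_of_solutions} and hence differentiable almost everywhere.

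The key observation is that the choice $u_e^\#(t) \in -\sign(x(t))$ makes the turning term nonpositive regardless of the sign of $x(t)$. A short case analysis gives $\omega_e x(t) u_e^\#(t) = -\omega_e |x(t)| \leq 0$: for $x(t) > 0$ we have $u_e^\# = -1$, so the product is $-\omega_e x(t) = -\omega_e |x(t)|$; for $x(t) < 0$ we have $u_e^\# = +1$, so the product is $\omega_e x(t) = -\omega_e |x(t)|$; and for $x(t) = 0$ the product vanishes, which again equals $-\omega_e |x(t)|$. Combined with the elementary bound $v_p \cos u_p(t) \leq v_p$, valid for any pursuer input, this yields $\dot y(t) \leq -\omega_e |x(t)| - v_e + v_p \leq v_p - v_e$ for almost every $t$ with $\xi(t) \in \mathcal{S}$.

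Since $v_p < v_e$ by the standing assumption of Problem \ref{prob:game_of_degree}, the right-hand side $v_p - v_e$ is strictly negative, which establishes the strict monotone decrease of $y(\cdot)$. To obtain the integrated bound, I would integrate the almost-everywhere inequality $\dot y(t) \leq v_p - v_e$ over $[0,t]$, invoking absolute continuity of $y(\cdot)$ to justify the fundamental theorem of calculus; this gives $y(t) - y(0) = \int_0^t \dot y(s)\, \mathrm{d}s \leq t\cdot(v_p - v_e)$, as required.

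The main subtlety — rather than a genuine obstacle — is to handle the set-valued sign function and the lack of classical differentiability cleanly. The point is that the product $\omega_e x u_e^\#$ equals $-\omega_e |x|$ for \emph{every} admissible selection $u_e^\# \in -\sign(x)$, including the ambiguous case $x = 0$, so no measurable selection of the evader strategy can spoil the bound; moreover, the inequalities need only hold almost everywhere for the absolutely continuous integration argument to go through, exactly as anticipated in Remark \ref{eq:differentiability_issue}. The qualifier restricting attention to times with $\xi(t) \in \mathcal{S}$ merely reflects that the claim concerns the trajectory while it remains inside the game set.
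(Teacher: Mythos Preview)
Your proposal is correct and follows essentially the same approach as the paper: bound the $y$-component of the dynamics by noting that $\omega_e x u_e^\# = -\omega_e |x| \leq 0$ under the strategy $u_e^\# \in -\sign(x)$, then use $v_p\cos u_p \leq v_p$ to obtain $\dot y \leq v_p - v_e$ almost everywhere, and integrate. Your treatment is in fact slightly more explicit about the case analysis for the set-valued sign and about invoking absolute continuity for the integration step.
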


\begin{proof}
Since $u_e^\#(\cdot)$ and $u_p(\cdot)$ are piecewise Lipschitz continuous by assumption, for almost all $t\in \R_{\geq 0}$, it holds that
    \begin{align*}
        \tfrac{\mathrm{d}}{\mathrm{d}t}y(t) &= f(\xi(t),u_e^\#(t),u_p(t))^\top \left[\begin{array}{c}
            0 \\
            1
        \end{array} \right] = \omega_e x(t) u_e^\#(t) - v_e + v_p \cos (u_p(t)) \\
        &=-\omega_e |x(t)| -v_e + v_p \cos (u_p(t)) \leq  -v_e + v_p  
    \end{align*}
    which shows the assertion and completes the proof.
\end{proof}

As a next preliminary result we show that the game of degree is symmetric with respect to the $y$-axis. To this end, consider the coordinate transformation 
\begin{align}
    \tilde{\xi} = \left[ \begin{array}{r}
         -x \\
         y
    \end{array}\right], \label{eq:xi_tilde}
\end{align}
which, in combination with the dynamics \eqref{eq:dynamics}, implies that
\begin{align}
\left[ \begin{array}{r}
         -\dot{\tilde{x}} \\
         \dot{\tilde{y}} 
    \end{array} \right] =
\left[ \begin{array}{c}
         \dot{x} \\
         \dot{y} 
    \end{array} \right] =         
    \begin{bmatrix}
        -\omega_e \tilde{y} u_e + v_p \sin u_p\\
        -\omega_e \tilde{x} u_e - v_e + v_p \cos u_p
    \end{bmatrix}.
\end{align}
With the coordinate transformations $\tilde{u}_p=-u_p$ and $\tilde{u}_e=-u_e$ the last expression can be further rewritten as
\begin{align}
\dot{\tilde{\xi}}=
\left[ \begin{array}{c}
         \dot{\tilde{x}} \\
         \dot{\tilde{y}} 
    \end{array} \right]  =         
    \begin{bmatrix}
        -(-\omega_e \tilde{y} u_e + v_p \sin u_p)\\
        -\omega_e \tilde{x} u_e - v_e + v_p \cos u_p
    \end{bmatrix} =
    \begin{bmatrix}
        -\omega_e \tilde{y} (-u_e) + v_p \sin (-u_p))\\
        \omega_e \tilde{x} (-u_e) - v_e + v_p \cos (-u_p) 
    \end{bmatrix} 
    = f(\tilde{\xi},\tilde{u}_e,\tilde{u}_p).
\end{align}
Thus, the original dynamics \eqref{eq:dynamics} is recovered, proving the following result.

\begin{corollary}[Symmetry with respect to $y$-axis] \label{cor:symmetry}
Consider Problem \ref{prob:game_of_degree}. If $u_e^*(\cdot)$ and $u_p^*(\cdot)$ are optimal for $\xi_0\in \mathcal{S}$, then $-u_e^*(\cdot)$ and $-u_p^*(\cdot)$ are optimal for $\tilde{\xi}_0\in \mathcal{S}$ in \eqref{eq:xi_tilde}. 
\hfill $\lrcorner$
\end{corollary}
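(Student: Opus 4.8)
The plan is to leverage the dynamics computation already carried out immediately above the corollary statement, which establishes that reflection across the $y$-axis together with negation of both inputs sends solutions of \eqref{eq:dynamics} to solutions. Writing $R = \left[\begin{smallmatrix} -1 & 0 \\ 0 & 1 \end{smallmatrix}\right]$ so that $\tilde{\xi} = R\xi$ in \eqref{eq:xi_tilde}, the established identity is precisely $f(R\xi, -u_e, -u_p) = R\, f(\xi, u_e, u_p)$. From this I would build an explicit bijection between admissible trajectory/input triples for the initial state $\xi_0$ and those for $\tilde{\xi}_0 = R\xi_0$, namely $(\xi(\cdot), u_e(\cdot), u_p(\cdot)) \mapsto (R\xi(\cdot), -u_e(\cdot), -u_p(\cdot))$, and then argue that this bijection preserves feasibility, the terminal conditions, and the payoff, so that the min--max value and its saddle point transfer.

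First I would verify that feasibility is preserved. Uniqueness of solutions (Remark \ref{rem:continuity_of_solutions}) guarantees that if $\xi(\cdot)$ is the solution from $\xi_0$ under $(u_e, u_p)$, then $R\xi(\cdot)$ is the solution from $\tilde{\xi}_0$ under $(-u_e, -u_p)$. The input constraints are invariant, since $-u_e \in [-1,1]$ whenever $u_e \in [-1,1]$, while $u_p \in \mathbb{R}$ is unrestricted. The game set $\mathcal{S}$ in \eqref{eq:S} and the target set $\mathcal{C}$ in \eqref{eq:C} are invariant under $R$ because they depend only on $x^2 + y^2$; hence $|R\xi(t)| = |\xi(t)|$ for all $t$, so the constraints $|\xi(0)| \leq \rho$ and $|\xi(T)| = \rho$ hold for the reflected trajectory at the \emph{same} time $T$. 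The terminal escape condition is preserved because $R$ is orthogonal: using the dynamics identity, $f(R\xi, -u_e, -u_p)^\top (R\xi) = (R f(\xi, u_e, u_p))^\top (R\xi) = f(\xi, u_e, u_p)^\top R^\top R\, \xi = f(\xi, u_e, u_p)^\top \xi$, so the sign of the terminal inner product in \eqref{eq:problem_constraints} is unchanged.

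With feasibility and the common terminal time established, the payoff transfers trivially: the running and terminal costs in \eqref{eq:defLG} are the constants $L \equiv 1$ and $G \equiv 0$, so the payoff of any admissible trajectory equals its escape time $T$, and corresponding trajectories escape at the identical time. Consequently the reflection bijection is payoff-preserving. Finally, since the map is an involution (applying it twice, with the double negation of inputs, returns the original data), it sets up a one-to-one correspondence between the strategy pairs available to each player for $\xi_0$ and those available for $\tilde{\xi}_0$. The saddle-point inequalities characterizing optimality of $(u_e^*, u_p^*)$ for $\xi_0$ therefore map term-by-term to the corresponding inequalities for $(-u_e^*, -u_p^*)$ at $\tilde{\xi}_0$, yielding $V(\tilde{\xi}_0) = V(\xi_0)$ and the claimed optimality.

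The only genuine subtlety is bookkeeping rather than a deep obstacle: I must ensure the correspondence respects the min--max (saddle) structure of the zero-sum game, i.e.\ that a deviation by either player on one side corresponds under $R$ to an admissible deviation on the other side with the same payoff change, so that neither player can unilaterally improve. This follows because the bijection is defined uniformly in the inputs (not merely along the optimal trajectory) and is both payoff- and feasibility-preserving. The essential analytic content, the dynamics identity $f(R\xi, -u_e, -u_p) = R\, f(\xi, u_e, u_p)$, has already been verified in the computation preceding the statement, so what remains is only the verification of invariance of the cost and terminal data, which is routine given the orthogonality of $R$.
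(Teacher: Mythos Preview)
Your proposal is correct and follows the same approach as the paper: both rely on the dynamics identity $\dot{\tilde{\xi}} = f(\tilde{\xi}, -u_e, -u_p)$ derived in the computation immediately preceding the corollary. The paper simply states that recovering the original dynamics under the reflection and input negation proves the result, whereas you explicitly verify that the input constraints, the sets $\mathcal{S}$ and $\mathcal{C}$, the terminal escape condition, and the payoff are all invariant under $R$ and then argue that the saddle-point structure transfers; this is a welcome elaboration of details the paper leaves implicit, but it is not a different route.
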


With this corollary, the optimal solution of the game on the negative $y$-axis can be derived.

\begin{lemma} \label{lem:strategy_on_y_axis}
Consider Problem \ref{prob:game_of_degree} and let $\xi_0\in \{\xi \in \mathcal{S}| x=0, \ y< 0 \}$. Then the optimal evader and pursuer strategies satisfy 
\begin{align}
    u_e^*(t)=0  \quad \text{and} \quad u_p^*(t)=0 \label{eq:optimal_input_y_axis}
\end{align}
for all $t\in \R_{\geq 0}$ such that $\xi_t(t;\xi_0,0,0) \in \mathcal{S}$. Moreover, for $\xi_0\in \{\xi \in \mathcal{S}| x=0, \ y< 0 \}$ optimal solutions satisfy
\begin{align}
    \xi(t;\xi_0,0,0) = 
    \left[ \begin{array}{c}
         0 \\
         \frac{t}{v_p-v_e} +y_0 
    \end{array}\right] \label{eq:solution_y_axis}
\end{align}
for $t\in [0,\frac{\rho+y_0}{v_e-v_p}]$.
\hfill $\lrcorner$
\end{lemma}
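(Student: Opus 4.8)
The plan is to pin down the value $V(\xi_0)$ and the saddle point by sandwiching the escape time between a lower bound that the pursuer can enforce and an upper bound that the evader can enforce, and then reading the optimal controls off the configurations in which both bounds are tight. Throughout I write $r=|\xi|$ and use the polar representation \eqref{eq:sin_cos_transform}; on the negative $y$-axis one has $x=0$, $\theta=\pi$, and $r_0=-y_0$ at $t=0$. The symmetry of the game about the $y$-axis (Corollary \ref{cor:symmetry}) makes an on-axis trajectory the natural candidate, but the actual work is done by the two bounds.

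For the \emph{lower bound}, I would let the pursuer point directly at the evader, i.e.\ use $u_p=\pi+\theta$ (cf.\ Remark \ref{rem:u_p_star}). As already exploited in the proof of Lemma \ref{lem:usable_part}, the evader's input cancels in $f(\xi,u_e,u_p)^\top\xi$, so with this choice
\begin{align}
\tfrac{1}{2}\tfrac{\mathrm{d}}{\mathrm{d}t} r^2 = f(\xi,u_e,u_p)^\top \xi = -v_p r - v_e y = r\left(-v_p - v_e\cos\theta\right)
\end{align}
holds independently of $u_e$. Dividing by $r$ gives $\dot r = -v_p-v_e\cos\theta \le v_e-v_p$, with equality if and only if $\cos\theta=-1$, i.e.\ if and only if $\xi$ lies on the negative $y$-axis. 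Hence, whatever the evader does, $r$ grows at rate at most $v_e-v_p$, so reaching the target $r=\rho$ from $r_0=-y_0$ takes at least $t^\ast:=(\rho+y_0)/(v_e-v_p)$; moreover equality forces the trajectory to stay on the negative $y$-axis, which in turn forces $u_e=0$ (so that $\dot x=-\omega_e y\,u_e=0$). This yields $V(\xi_0)\ge t^\ast$ and shows the evader's only time-optimal response keeps $x\equiv 0$.

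For the \emph{upper bound}, I would have the evader use the feedback $u_e^\#\in-\sign(x)$ of Lemma \ref{lem:y_decrease}, which on the axis admits the selection $u_e=0$. By that lemma, $y(t)\le y_0+(v_p-v_e)t$ for every pursuer strategy while $\xi\in\mathcal S$, so $y$ reaches $-\rho$ no later than $t^\ast$. Since $y\ge-r\ge-\rho$ on $\mathcal S$, the value $y=-\rho$ can only be attained at $\xi=[0,-\rho]^\top$, which lies in the usable part by Lemma \ref{lem:usable_part} (as $-\rho<-\rho\,v_p/v_e$ because $v_p<v_e$). Thus the evader guarantees escape by $t^\ast$, giving $V(\xi_0)\le t^\ast$. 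Combining the two bounds gives $V(\xi_0)=t^\ast$, and the equality case of the lower bound identifies the saddle-point trajectory as the on-axis one, along which $u_e^\ast=0$ and (pointing at the evader from the negative axis) $u_p^\ast=\pi+\pi\equiv 0$. Integrating \eqref{eq:dynamics} with $u_e=u_p=0$ then gives $\dot x=0$, $\dot y=v_p-v_e$, i.e.\ the straight-line motion \eqref{eq:solution_y_axis} down the axis on $[0,t^\ast]$ with $t^\ast=(\rho+y_0)/(v_e-v_p)$.

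The main obstacle is making the upper bound rigorous when the pursuer drives $\xi$ to the boundary on the nonusable part before $t^\ast$: there the pursuer can momentarily maintain $r=\rho$ (Remark \ref{rem:u_e_independence}), so I must argue that the monotone decrease of $y$ from Lemma \ref{lem:y_decrease} persists while $\xi$ is held on the boundary and therefore drives $\xi$ into the usable part no later than $t^\ast$, so that escape cannot be postponed indefinitely. A secondary subtlety is assembling the two one-sided bounds into a genuine saddle point: since the feedback $u_e^\#$ guarantees $\le t^\ast$ against every pursuer while ``point at the evader'' guarantees $\ge t^\ast$ against every evader, one concludes $\min_{u_e}\max_{u_p}=\max_{u_p}\min_{u_e}=t^\ast$ and that the on-axis pair $(u_e^\ast,u_p^\ast)=(0,0)$ in \eqref{eq:optimal_input_y_axis} is optimal.
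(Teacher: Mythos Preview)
Your argument is correct. Both you and the paper rest on the same key observation: with the pursuer pointing at the evader, $\dot r=-v_p-v_e\cos\theta\le v_e-v_p$, with equality only on the negative $y$-axis, so any evader deviation from $u_e=0$ costs time. The difference is structural. The paper checks directly that $(u_e,u_p)=(0,0)$ is a saddle point by showing neither player benefits from deviating: it computes the pursuer's best response to $u_e=0$ on the axis, then argues by contradiction that any interval with $u_e\neq 0$ lets the ``point at evader'' pursuer strictly reduce $\dot r$. You instead sandwich $V(\xi_0)$ explicitly between two one-sided guarantees, invoking Lemma~\ref{lem:y_decrease} for the upper bound; the paper never calls on that lemma. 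Your packaging is arguably cleaner because it yields the value $t^\ast$ directly and makes the boundary/NUP obstacle transparent (the monotone $y$-decrease persists on the boundary since $\partial\mathcal S\subset\mathcal S$, and at $[0,-\rho]^\top$ one has $f^\top\xi=-\rho v_p\cos u_p+\rho v_e\ge \rho(v_e-v_p)>0$ for every $u_p$, so termination at $t^\ast$ is forced). The paper's route is shorter but less explicit about the upper bound.
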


Since the proof is not insightful, it is only reported in Appendix \ref{ap:auxilliary_results} for completeness.

\subsection{Candidate Optimal Strategies, the Hamiltonian and the Adjoint Equation}

Our preparatory results in the previous section have defined the usable part of the game of degree and derived optimal solutions for trajectories starting on the negative $y-$axis. 
In this section, we derive additional conditions on the optimal inputs $u_e^*$ and $u_p^*$ relying on the Hamiltonian function and on adjoint (adjoint) variables and the adjoint equation.
In the derivation, we focus on regular parts of the game set where the following properties are satisfied and are used in the derivation of optimal trajectories of the game.

\begin{definition}[\mbox{Regular Points \& Parts, \cite[Ch. 5.4.4]{Lewin2012}}] \label{def:regular_point}
    Consider Problem \ref{prob:game_of_degree} with game set $\mathcal{S}$ in \eqref{eq:UP_calculation}.
    A point $\xi \in \mathcal{S}$ is called regular, if the optimal value function $V:\mathcal{S} \rightarrow \R$ defined in \eqref{eq:problem_V} is continuously differentiable in $\xi$. If $\xi_t(t;\xi_0,u_e^*(t),u_p^*(t))$ is regular for all $t\in [t_1,t_2]$, $t_1,t_2\in \R_{\geq 0}$, then $\xi_t(\cdot;\xi_0,u_e^*(\cdot),u_p^*(\cdot)):[t_1,t_2]\rightarrow \mathcal{S}$ is called regular part of an optimal trajectory.
    \hfill 
    $\lrcorner$
\end{definition}

For a regular point $\xi$ the gradient of the optimal value function is denoted by $\nabla V(\xi)$. To shorten expressions in the following, we adopt the notation of adjoint variables $\bp \in \R^2$ and write 
\begin{align}
 \nabla V(\xi) = 
  \left[ \begin{array}{c}
       V_x(\xi) \\
       V_y(\xi)  
\end{array} \right] 
=\left[ \begin{array}{c}
       p_x \\
       p_y   
\end{array} \right] = \bp.
\end{align}
The definition of the gradient of the optimal value function allows us to define the Hamiltonian function
\begin{align}
\label{eq:Ham}
H(\xi,\bp,u_e, u_p) =  \bp^{\top} f(\xi, u_e, u_p) + L(\xi,u_e,u_p)
\end{align}
which satisfies 
\begin{align}
\label{eq:Ham_cond1}
\min_{u_e \in [-1,1]} H(\xi,\bp,u_e, u_p^*) \leq H(\xi,\bp,u_e^*, u_p^*) \leq \max_{u_p \in (-\pi,\pi]} H(\xi,\bp,u_e^*, u_p),
\end{align}
and
\begin{align}
    H(\xi(t),\nabla V(\xi(t)),u_e^*(t),u_p^*(t)) = H(\xi(t),\bp(t),u_e^*(t),u_p^*(t))=0  \label{eq:H_0_condition}
\end{align}
for all $t\in \R_{\geq 0}$ on regular parts (see \cite[Theorem 5.5.1]{Lewin2012}). Before we use this condition, we first derive a relation between $\nabla V(\xi)=\bp$ and the optimal inputs $u_e^*$ and $u_p^*$.

\begin{lemma}[Candidate Optimal Strategies] \label{lem:opt_control_regular_E} 
Consider the game of degree \eqref{eq:problem_constraints} defined in Problem \ref{prob:game_of_degree}. On regular parts the optimal strategy of the evader satisfies
    \begin{align}
    u_e^*& \in  \sign(p_x y-p_y  x) \label{eq:switch_function}
    \end{align}
    and the optimal strategy of the pursuer, $u_p^* \in \R$, satisfies 
    \begin{align}
        \bp = c \left[ \begin{array}{c}
         \sin(u_p^*)  \\
         \cos(u_p^*) 
    \end{array} \right] \label{eq:opt_u_star}
    \end{align}
    for $c\in \R$. \hfill $\lrcorner$    
\end{lemma}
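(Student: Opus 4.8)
The plan is to apply the pointwise saddle-point conditions \eqref{eq:Ham_cond1}, valid on regular parts, after observing that the Hamiltonian \eqref{eq:Ham} separates additively into a part depending only on $u_e$ and a part depending only on $u_p$. First I would substitute $f$ from \eqref{eq:f_dynamics} and $L=1$ from \eqref{eq:defLG} into \eqref{eq:Ham} to obtain
\begin{align*}
 H(\xi,\bp,u_e,u_p) = -\omega_e u_e\,(p_x y - p_y x) + v_p\,(p_x \sin u_p + p_y \cos u_p) - p_y v_e + 1.
\end{align*}
The first term collects all dependence on $u_e$, the middle term all dependence on $u_p$, and the remaining terms are independent of both inputs. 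Because the two controls enter in separate additive terms, the inner minimization over $u_e$ and the inner maximization over $u_p$ required by \eqref{eq:Ham_cond1} decouple and can be performed independently (so that the Isaacs min--max/max--min ordering is immaterial).

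For the evader, I would minimize the term linear in $u_e$, namely $-\omega_e u_e\,(p_x y - p_y x)$, over $u_e \in [-1,1]$. Since $\omega_e > 0$, the minimizing $u_e$ equals $+1$ when $p_x y - p_y x > 0$, equals $-1$ when $p_x y - p_y x < 0$, and is arbitrary in $[-1,1]$ when $p_x y - p_y x = 0$. By the definition of the set-valued sign function, this is precisely $u_e^* \in \sign(p_x y - p_y x)$, which is \eqref{eq:switch_function}.

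For the pursuer, I would maximize $v_p\,(p_x \sin u_p + p_y \cos u_p) = v_p\,\bp^\top [\sin u_p, \cos u_p]^\top$ over $u_p$. Since $[\sin u_p, \cos u_p]^\top$ is a unit vector, the Cauchy--Schwarz inequality gives $\bp^\top [\sin u_p, \cos u_p]^\top \leq |\bp|$, with equality exactly when the unit vector aligns with $\bp$, i.e.\ $[\sin u_p^*, \cos u_p^*]^\top = \bp/|\bp|$ for $\bp \neq 0$. Rearranging yields $\bp = |\bp|\,[\sin u_p^*, \cos u_p^*]^\top$, which is \eqref{eq:opt_u_star} with $c = |\bp|$; the degenerate case $\bp = 0$ is covered trivially by $c = 0$.

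The computation is elementary, so the only points needing care are the justification for optimizing over $u_e$ and $u_p$ separately (which rests on their additive separation in $H$) and the correct bookkeeping of the sign flip introduced by the factor $-\omega_e$ when translating the evader's minimization into the sign condition. I would not expect any genuine obstacle beyond this.
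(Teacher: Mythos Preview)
Your proof is correct and follows essentially the same approach as the paper: expand the Hamiltonian using \eqref{eq:f_dynamics} and \eqref{eq:defLG}, separate the terms depending on $u_e$ from those depending on $u_p$, and read off the minimizer and maximizer. The paper's proof is terser (it simply writes the separated min--max and states that the assertion follows), while you supply the Cauchy--Schwarz justification for the pursuer and track the sign bookkeeping for the evader explicitly; in particular, your observation that $c=|\bp|\ge 0$ is slightly sharper than the paper's bare $c\in\R$.
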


\begin{proof}
We use condition \eqref{eq:Ham_cond1} with $H$ defined in \eqref{eq:Ham} to find optimal inputs $u_e$ and $u_p$. In particular, using the definition of $f$ in \eqref{eq:f_dynamics} and realizing that $L$ in \eqref{eq:defLG} is independent of $u_e$ and $u_p$, we consider the optimization problem
\begin{align*}
    &\min_{u_e \in [-1,1]} \max_{u_p \in \R}    \big[ p_x  (-\omega_e y u_e  +  v_p \sin u_p)   + p_y (\omega_e x u_e - v_e  +  v_p \cos u_p) \big] \\
     & \quad =  - p_y   v_e +  \min_{u_e \in [-1,1]}  
    \left[   (p_y x -p_x y)  \omega_e u_e \right]  + \max_{u_p \in \R} \left[ (p_x  \sin u_p  + p_y \cos u_p) v_p      \right].
\end{align*}
From this expression, the assertion follows immediately.
\end{proof}

As a next step, we focus on the adjoint equation
\begin{align}
    \dot{\bp} = -\tfrac{\mathrm{d}}{\mathrm{d} \xi} H(\xi,\bp,u_e^*,u_p^*),
\end{align}
which is satisfied at regular points (see \cite[Thm. 5.6.1]{Lewin2012}). 
Using the definition of the Hamiltonian \eqref{eq:Ham}, the adjoint equation can thus be derived as
\begin{align}
\begin{split}
    \dot{\bp} &=  -\tfrac{\mathrm{d}}{\mathrm{d} \xi} \left( p_x( -\omega_e y u_e + v_p \sin u_p) +   p_y (\omega_e x u_e - v_e + v_p \cos u_p) +1 \right) \\
    &=  - \left[\begin{array}{c}
        p_y \omega_e  u_e^*    \\
        -p_x \omega_e u_e^*  
    \end{array}\right] =\omega_e u_e^* \left[  \begin{array}{rr}
          0 & -1  \\
         1 & 0  \\ 
    \end{array}  \right] \left[  \begin{array}{r}
        p_x  \\
        p_y   
    \end{array} \right]=\omega_e u_e^* \left[  \begin{array}{rr}
          0 & -1  \\
         1 & 0  \\ 
    \end{array}  \right] \bp.  
\end{split}\label{eq:adjoint_equation}
\end{align}
With respect to the time $\tau = T - t$ $(T>0)$, the adjoint equation can be equivalently written as
\begin{align}
\mathring{\bp}_{\tau}  =  -\omega_e u_{e_\tau}^* \left[ \begin{array}{rr}
       0  & -1  \\
       1  & 0 
    \end{array} \right] \left[  \begin{array}{r}
        p_{x_\tau}  \\
        p_{y_\tau}   
    \end{array} \right]= -\omega_e u_{e_\tau}^* \left[ \begin{array}{rr}
       0  & -1  \\
       1  & 0 
    \end{array} \right] \bp_\tau. \label{eq:dot_p_equation_tau}
\end{align}

For later deciding whether to use $u_e=-1$ or $u_e=1$ according to \eqref{eq:switch_function}, let us define the switching functions $\sigma,\sigma_\tau:\R_{\geq 0} \rightarrow \R$ as
\begin{align}
    \sigma(t) = p_{x}(t) y(t) - p_{y}(t) x(t), \qquad     \sigma_\tau(\tau) = p_{x_\tau}(\tau) y_\tau(\tau) - p_{y_\tau}(\tau) x_\tau(\tau) \label{eq:sigma_function}
\end{align}
and consider their derivatives with respect to forward time $t$ and backward time $\tau$, respectively.

\begin{lemma}[Derivative of the $\sigma$-function] \label{lem:sigma_dot}
    Consider the functions \eqref{eq:sigma_function} together with optimal solutions $\xi_\tau(\cdot;\xi_0,u_e^*(\cdot),u_p^*(\cdot))$, $p_\tau(\cdot)$ satisfying the 
    dynamics \eqref{eq:dynamics} and the dynamics of the adjoint equation \eqref{eq:dot_p_equation_tau}, respectively.
If $\xi_\tau(\cdot;\xi_0,u_e^*(\cdot),u_p^*(\cdot))$ and $p_\tau(\cdot)$ are continuously differentiable, then it holds that
\begin{align}
    \dot{\sigma}(t) = -p_{x}(t) v_e \qquad \text{ and } \qquad \mathring{\sigma}_\tau(\tau) = p_{x_\tau}(\tau) v_e. \label{eq:sigma_dot}
\end{align}
\end{lemma}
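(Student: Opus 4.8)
The plan is to differentiate $\sigma$ directly via the product rule and substitute the state dynamics together with the adjoint equation, watching for cancellations. Suppressing the argument $t$ and writing $\sigma = p_x y - p_y x$ as in \eqref{eq:sigma_function}, the product rule gives $\dot\sigma = \dot p_x\, y + p_x \dot y - \dot p_y\, x - p_y \dot x$. I would then insert $\dot x, \dot y$ from the dynamics \eqref{eq:f_dynamics} and $\dot p_x, \dot p_y$ from the adjoint equation \eqref{eq:adjoint_equation}, which read $\dot p_x = -\omega_e u_e^* p_y$ and $\dot p_y = \omega_e u_e^* p_x$ on regular parts.

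After substitution I expect the terms carrying the evader turn rate $\omega_e u_e^*$ to cancel in two pairs: the contribution $\dot p_x y = -\omega_e u_e^* p_y y$ cancels against the term $-p_y(-\omega_e y u_e^*)$ appearing inside $-p_y \dot x$, and likewise the term $p_x(\omega_e x u_e^*)$ inside $p_x \dot y$ cancels against $-\dot p_y x = -\omega_e u_e^* p_x x$. What survives is $\dot\sigma = -p_x v_e + v_p\,(p_x \cos u_p^* - p_y \sin u_p^*)$, so everything reduces to showing that the residual pursuer term vanishes.

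The single nontrivial ingredient is the candidate optimal pursuer strategy from Lemma \ref{lem:opt_control_regular_E}: on regular parts \eqref{eq:opt_u_star} states $p_x = c\sin u_p^*$ and $p_y = c\cos u_p^*$ for some $c\in\R$, whence $p_x\cos u_p^* - p_y\sin u_p^* = c(\sin u_p^*\cos u_p^* - \cos u_p^*\sin u_p^*) = 0$. This eliminates the $v_p$ contribution and yields $\dot\sigma = -p_x v_e$, as claimed.

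For the backward-time identity I would repeat the same computation using $\mathring\xi_\tau = -f(\xi_\tau,u_{e_\tau}^*,u_{p_\tau}^*)$ and the backward adjoint equation \eqref{eq:dot_p_equation_tau}; since every sign flips relative to the forward case, the $\omega_e u_{e_\tau}^*$ terms again cancel in pairs, the $v_p$ term is again removed by \eqref{eq:opt_u_star}, and one obtains $\mathring\sigma_\tau = +p_{x_\tau} v_e$. Equivalently, since $\tfrac{\mathrm{d}}{\mathrm{d}\tau} = -\tfrac{\mathrm{d}}{\mathrm{d}t}$, the backward result follows immediately from the forward one with the expected sign change. The computation is entirely routine, so there is no real obstacle; the only point requiring care is the bookkeeping of the cancellations, and the one genuine idea is recognizing that the $v_p$-term must be killed by the optimal-pursuer relation \eqref{eq:opt_u_star} rather than disappearing on its own.
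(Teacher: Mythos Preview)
Your proposal is correct and mirrors the paper's own proof: both differentiate $\sigma$ via the product rule, substitute the state dynamics and adjoint equation to cancel the $\omega_e u_e^*$ terms, and then invoke the optimal-pursuer relation \eqref{eq:opt_u_star} to eliminate the residual $v_p$ term. The only cosmetic difference is that the paper carries out the computation in backward time $\tau$ and deduces the forward result from $\tfrac{\mathrm{d}t}{\mathrm{d}\tau}=-1$, whereas you do the reverse.
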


\begin{proof}
We focus on $\sigma_\tau(\cdot)$ in the proof. The result for $\sigma(\cdot)$ follows from the relation $\frac{\mathrm{d} t}{\mathrm{d} \tau}=-1$.
For simplicity of exposition, we omit the time argument $\tau$ in the following derivations. With equations \eqref{eq:dynamics} and \eqref{eq:dot_p_equation_tau}, it holds that
\begin{align*}
    \mathring{\sigma} 
    &=\mathring{p}_{x_\tau} y_\tau + p_{x_\tau} \mathring{y}_\tau - \mathring{p}_{y_\tau} x_\tau - p_{y_\tau} \mathring{x}_\tau \\
    &=(\omega_e u_{e_\tau}^* p_{y_\tau}) y_\tau + p_{x_\tau} (-(\omega_e x_\tau u_{e_\tau}^* - v_e + v_p \cos u_{p_\tau})) - (-\omega_e u_{e_\tau}^* p_{x_\tau}) x_\tau - p_{y_\tau} (-(-\omega_e y_\tau u_{e_\tau}^* + v_p \sin u_{p_\tau}^*)) \\
    &=\omega_e u_{e_\tau}^* p_{y_\tau} y_\tau - p_{x_\tau} \omega_e x_\tau u_{e_\tau}^* + p_{x_\tau} v_e - p_{x_\tau} v_p \cos u_{p_\tau}^* + \omega_e u_{e_\tau}^* p_{x_\tau} x_\tau - p_{y_\tau} \omega_e y_\tau u_{e_\tau}^* + p_{y_\tau} v_p \sin u_{p_\tau}^* \\
    &= p_{x_\tau} v_e - p_{x_\tau} v_p \cos u_{p_\tau}^* + p_{y_\tau} v_p \sin u_{p_\tau}^*.
\end{align*}
Then, the assertion follows from \eqref{eq:opt_u_star}.
\end{proof}

Having established candidate optimal strategies in Lemma \ref{lem:opt_control_regular_E}, we next examine how they lead to candidate optimal solutions to the dynamics \eqref{eq:dynamics} with different initial (or terminal) conditions.

\subsection{Candidate Optimal Solutions Terminating on Usable Part with $x\neq 0$} \label{sec:solutions_ending_on_y_axis}

In this section we focus on solutions terminating on
$\{\xi \in \text{UP}| x\neq 0\}$ (where UP was defined in \eqref{eq:UP1}). To this end, we derive terminal conditions $\bp(T)$ of the adjoint variable corresponding to the adjoint equation \eqref{eq:adjoint_equation}, or equivalently, initial conditions $\bp_\tau(0)$ for \eqref{eq:dot_p_equation_tau} with respect to the time argument $\tau=T-t$. To make this point more precise, we derive trajectories $\xi_\tau(\cdot)$ starting on the usable part $\text{UP}$ and pointing to the interior of the game set $\mathcal{S}$ in backward time $\tau$. In this regard, here, initial conditions are with respect to $\tau=0$.

According to \cite[Corollary 5.2.3]{Lewin2012},
for $\xi \in \text{UP}$ where optimal trajectories terminate it holds that
\begin{align}
    \bp^{\top} {\bf t}(\xi) = \nabla G(\xi)^\top {\bf t}(\xi) \label{eq:terminate_condition}
\end{align}
and where ${\bf t}(\xi)\in \R^2$ denotes a tangent vector to the target set $\mathcal{C}$ at  $\xi$.
A tangent vector satisfying $\xi^\top {\bf t}(\xi)=0$ can be defined as
\begin{align}
    {\bf t}(\xi) = \left[ \begin{array}{r}
         y  \\
         -x 
    \end{array}\right].
\end{align}
Additionally, recall that the usable part $\text{UP}$ is defined in \eqref{eq:UP1} and the terminal costs $G(\xi)=0$ are defined in 
\eqref{eq:terminate_condition}.
Since $\nabla G(\xi)=0$, \eqref{eq:terminate_condition} leads to the condition
\begin{align}
    \bp^{\top} {\bf t}(\xi) = \left[ 
    \begin{array}{c}
       p_x  \\
       p_y   
    \end{array}
    \right]^\top 
    \left[ 
    \begin{array}{c}
       y  \\
       -x   
    \end{array}
    \right] = p_x y - p_y x = 0. \label{eq:px_py_relation2}
\end{align}
Since $\xi$ satisfies $|\xi| = \rho$ when the game terminates, we can represent 
$\xi \in \text{UP}$ through through $\theta_\xi \in (-\pi,\pi]$ using the transformation \eqref{eq:sin_cos_transform}:
\begin{align}
    x= \rho \sin(\theta_\xi), \qquad y = \rho \cos(\theta_\xi). \label{eq:terminal_condition}
\end{align}
In this representation, 
\eqref{eq:px_py_relation2}
can be equivalently stated as
\begin{align}
    p_x \cos(\theta_\xi) - p_y \sin(\theta_\xi) = 0
    \qquad \text{or} \qquad
    p_x = p_y \tan(\theta_\xi). \label{eq:p_x_p_y_relation}
\end{align}

To derive a second condition, we use the Hamiltonian \eqref{eq:Ham} and the condition \eqref{eq:H_0_condition}.
Accordingly, for regular parts on the optimal trajectory, for points on UP, it holds that
\begin{align*}
0=  &p_x(-\omega_e y u_e^* + v_p \sin u_p^*) + p_y(\omega_e x u_e - v_e + v_p \cos u_p^*) +1\\
=   &p_x  v_p \sin u_p^*  + p_y( - v_e + v_p \cos u_p^*) +1
\end{align*}
and where the second equality follows from \eqref{eq:px_py_relation2}.
Note that we have not specified $u_e^*$ and $u_p^*$  yet.
With \eqref{eq:p_x_p_y_relation}, the last expression can be rewritten as
\begin{align*}
    0=&p_y \tfrac{\sin \theta_\xi}{\cos \theta_\xi} v_p \sin (u_p^*)  + p_y( - v_e + v_p \cos (u_p)) +1 
\end{align*}
and thus
\begin{align}
  p_y  =&- \frac{1}{v_p \left[\tfrac{\sin(\theta_\xi)}{\cos(\theta_\xi)}\sin(u_p^*) + \cos(u_p^*)\right]-v_e} \nonumber \\
  =&- \frac{\cos(\theta_\xi)}{v_p [\sin(\theta_\xi)\sin(u_p^*) + \cos(\theta_\xi) \cos(u_p^*)]- \cos(\theta_\xi) v_e} \nonumber \\
  =&- \frac{\cos(\theta_\xi)}{v_p \cos(\theta_\xi -u_p^*)- \cos(\theta_\xi) v_e}. \label{eq:terminal_cond_py}
\end{align}
Again using \eqref{eq:p_x_p_y_relation}, $p_x$ satisfies
\begin{align}
\begin{split}
    p_x 
    &= - \tfrac{\sin \theta_\xi}{\cos \theta_\xi}  \frac{\cos(\theta_\xi)}{v_p \cos(\theta_\xi -u_p^*)- \cos(\theta_\xi) v_e} \\
    &= - \frac{\sin(\theta_\xi)}{v_p \cos(\theta_\xi -u_p^*)- \cos(\theta_\xi) v_e}.
\end{split} \label{eq:terminal_cond_px}
\end{align}

Under the assumption that $u_{e_\tau}^*(\cdot)$ is constant (which is the case if  $\sigma_{\tau}(\cdot)=p_{x_\tau}(\cdot) y_{\tau}(\cdot)-p_{y_\tau}(\cdot)  x(\cdot)$ does not change sign according to \eqref{eq:switch_function} and \eqref{eq:sigma_function}), the solution of \eqref{eq:dot_p_equation_tau} with initial condition \eqref{eq:terminal_cond_py}, \eqref{eq:terminal_cond_px} is given by 
\begin{align}
\begin{split}
    \bp_{\tau}(\tau) &= \left[ \begin{array}{rr}
        \cos(-\omega_e u_{e_\tau}^*(\tau) \tau) & -\sin(-\omega_e u_{e_\tau}^*(\tau) \tau) \\
        \sin(-\omega_e u_{e_\tau}^*(\tau)\tau)  & \cos(-\omega_e u_{e_\tau}^*(\tau)\tau) 
    \end{array}\right]
    \left[\begin{array}{c}
         \tfrac{\sin \theta_\xi}{v_p \cos(\theta_\xi -u_{p_\tau}^*(0))- \cos(\theta_\xi) v_e}  \\
         \tfrac{\cos \theta_\xi}{v_p \cos(\theta_\xi -u_{p_\tau}^*(0))- \cos(\theta_\xi) v_e}  
    \end{array} \right] \\
    &=\left[\begin{array}{c}
         -\tfrac{\sin (-\omega_e u_{e_\tau}^*(\tau) \tau - \theta_\xi)}{v_p \cos(\theta_\xi -u_{p_\tau}^*(0))- \cos(\theta_\xi) v_e}  \\
         \tfrac{\cos (-\omega_e u_e^* \tau - \theta_\xi)}{v_p \cos(\theta_\xi -u_p^*(0))- \cos(\theta_\xi) v_e}  
    \end{array} \right] 
=    \left[\begin{array}{c}
         \tfrac{\sin (\theta_\xi + \omega_e u_{e_\tau}^*(\tau) \tau)}{v_p \cos(\theta_\xi -u_{p_\tau}^*(0))- \cos(\theta_\xi) v_e}  \\
         \tfrac{\cos (\theta_\xi + \omega_e u_{e_\tau}^*(\tau) \tau)}{v_p \cos(\theta_\xi -u_{p_\tau}^*(0))- \cos(\theta_\xi) v_e}  
    \end{array} \right]=\left[  \begin{array}{r}
        p_{x_\tau}  \\
        p_{y_\tau}   
    \end{array} \right].
    \end{split} \label{eq:adjoint_variables}
\end{align}

Using this information in \eqref{eq:opt_u_star}, i.e., 
\begin{align}
\frac{1}{v_p \cos(\theta_\xi -u_{p_\tau}^*(0))- \cos(\theta_\xi) v_e}
\left[\begin{array}{c}
         \sin (\theta_\xi+\omega_e u_{e_\tau}^*(\tau) \tau )  \\
         \cos (\theta_\xi+\omega_e u_{e_\tau}^*(\tau) \tau )
    \end{array} \right]
    = c \left[ \begin{array}{c}
     \sin(u_{p_\tau}^*(\tau))  \\
     \cos(u_{p_\tau}^*(\tau)) 
\end{array} \right]  
= -c \left[ \begin{array}{c}
     \sin(u_{p_\tau}^*(\tau)-\pi)  \\
     \cos(u_{p_\tau}^*(\tau)-\pi) 
\end{array} \right] \label{eq:u_p_cases}
\end{align}
implies that either
$u_{p_\tau}^*(\tau)=  \theta_\xi+ \omega_e u_{e_\tau}^*(\tau) \tau$
or $u_{p_\tau}^*(\tau)= \pi + \theta_\xi+ \omega_e u_{e_\tau}^*(\tau) \tau$.
From Remark \ref{rem:u_p_star} we know that $u_{p_\tau}^*(0) = \pi+\theta_\xi$ ensures that the pursuer points to the evader, i.e., $f(\xi_{\tau}(0),u_{e_\tau}^*(0),\pi+\theta_\xi)^\top \xi_{\tau}(0)<0$
(while $f(\xi_{\tau}(0),u_{e_\tau}^*(0),\theta_\xi)^\top \xi_{\tau}(0)>0$),
and thus we can conclude the optimal pursuer strategy
\begin{align}
    u_{p_\tau}^*(\tau)= \pi + \theta_\xi+ \omega_e u_{e_\tau}^*(\tau) \tau. \label{eq:up_star1}
\end{align}

As a next step we turn our attention to $u_{e_\tau}^*(\cdot)$.
According to \eqref{eq:px_py_relation2}, on the target set $\mathcal{C}$, it holds that $\sigma_\tau(0)=p_{x_\tau}(0) y_{\tau}(0) - p_{y_\tau}(0) x_{\tau}(0) = 0$. Indeed, we have already observed in 
Remark \ref{rem:u_e_independence} that on $\mathcal{C}$ the game ends or continues independent of the evader's strategy. To derive $u_{e_\tau}^*(\cdot)$ for $\xi \in \mathcal{S}\backslash \mathcal{C}$, we use Lemma \ref{lem:sigma_dot} and the general solution \eqref{eq:adjoint_variables} of the adjoint variables. 
From the mean value theorem, it follows that for $\tau_1>0$ sufficiently small, there exists $\tau_2 \in (0,\tau_1)$ such that
\begin{align}
    \mathring{\sigma}_\tau(\tau_2) = \frac{\sigma_\tau(\tau_1)-\sigma_\tau(0)}{\tau_1-0},
\end{align}
i.e., with $\sigma(0)=0$ and $p_{x_\tau}(\cdot)$ defined in \eqref{eq:adjoint_variables} it holds that 
\begin{align}
    \sigma_\tau(\tau_1) = \tau_1 \mathring{\sigma}_\tau(\tau_2) =  \tau_1 p_{x_\tau}(\tau_2) v_e
    = \tau_1 v_e \tfrac{\sin (\theta_\xi + \omega_e u_e^*(\tau_2) \tau_2)}{v_p \cos(\theta_\xi -u_p^*(0))- \cos(\theta_\xi) v_e}.
\end{align}
The denominator satisfies
\begin{align}
    v_p \cos(\theta_\xi -\pi-\theta_\xi)- \cos(\theta_\xi) v_e = -v_p - \cos(\theta_\xi) v_e > -v_p + v_e >0
\end{align}
and where we have used that $\cos(\theta_\xi) \leq -\frac{v_p}{v_e}$ according to the definition of the usable part.
Since $\cos(\theta_\xi) \leq -\frac{v_p}{v_e}<0$, it additionally holds that $\theta_\xi \in [-\pi,-\frac{\pi}{2}) \cup  (\frac{\pi}{2},\pi]$. 
Thus, $x_0=\rho \sin(\theta_\xi)>0$ for $\theta_\xi \in (\frac{\pi}{2},\pi)$, $x_0=\rho\sin(\theta_\xi)<0$ for $\theta_\xi \in (-\pi,-\frac{\pi}{2})$ and $\theta_\xi=\pi$ corresponds to the $y$-axis which we have excluded from the discussion in this section. 
For sufficiently small $\tau$, we can conclude that $\sigma(\tau)>0$ for $x_0>0$ and $\sigma(\tau)<0$ for $x_0<0$. Accordingly, for solutions starting on the usable part $\xi_{0}\in \{\xi \in \text{US}|x\neq 0\}$ in backward time, the optimal strategy of the evader satisfies
\begin{align}
    u_{e_\tau}^*(\tau) = -\sign(x_0). \label{eq:opt_u_e_star}
\end{align}
Note that for $x_0\neq 0$, $\sign(x_0)$ is uniquely defined. The definition of $u_{e_\tau}(\cdot)$ also takes into account the fact that on $\text{US}$ the value of $u_e^*$ does not matter since it is multiplied with $\tau=0$ according to \eqref{eq:adjoint_variables}.
We can additionally observe that on the boundary of the game set $y=\rho \cos(\theta_\xi)$ and thus according to \eqref{eq:UP1} it holds that $\cos(\theta_\xi) < -\frac{v_p}{v_e}$. Hence, the denominator in \eqref{eq:adjoint_variables} is unequal to zero and $\bp(\cdot)$ is well defined.

We summarize the derivations in this section in the following theorem.

\begin{theorem} \label{thm:solutions_usable_part}
    Consider Problem \ref{prob:game_of_degree} and let $\xi_0\in \{\xi \in \text{US}| x\neq 0 \}$ denote an initial condition in backward time (i.e., at $\tau = 0$). Moreover, let $\theta_{\xi_0} \in (-\pi,\pi]$ be defined such that
    \begin{align}
        \xi_0 = \left[ \begin{array}{c}
             \rho \sin(\theta_{\xi_0})  \\
             \rho \sin(\theta_{\xi_0})
        \end{array}\right].
    \end{align}
    Then the optimal strategy of the evader satisfies $u_{e_\tau}^*(\tau)=-\sign(x_0)$ and the optimal strategy of the pursuer satisfies
    \begin{align}
        u_p^*(\tau)= \pi + \theta_{\xi_0} -\sign(x_0) \omega_e \tau. 
    \end{align}
    Moreover, the optimal solutions $\xi_\tau(\cdot)$ and $\bp_\tau(\cdot)$ satisfy
    \begin{align}
    \xi_\tau(\tau) &= \frac{1}{\omega_e} \left[ \begin{array}{c}
     -\sign(x_0) v_e  +\sign(x_0)v_e  \cos \left(\tau \omega_e  \right)+\rho \omega_e \sin \left(\theta_{\xi_0} -\sign(x_0)\tau \omega_e  \right)+\tau v_p  \omega_e \sin \left(\theta_{\xi_0} -\sign(x_0)\tau \omega_e  \right) \\
      v_e  \sin \left(\tau \omega_e \right)+\rho  \omega_e  \cos \left(\theta_{\xi_0} -\sign(x_0)\tau  \omega_e  \right)+\tau v_p  \omega_e \cos \left(\theta_{\xi_0} -\sign(x_0)\tau \omega_e  \right)
    \end{array} \right], \label{eq:solutions_usable_part} \\
    \bp_{\tau}(\tau) & =  \frac{1}{-v_p - \cos(\theta_{\xi_0}) v_e}    \left[\begin{array}{c}
         \sin (\theta_{\xi_0} -\sign(x_0) \omega_e  \tau)  \\
         \cos (\theta_{\xi_0} -\sign(x_0) \omega_e  \tau)
    \end{array} \right],  \label{eq:solution_adjoint_variables_p1}
\end{align}
respectively, for all $\tau \in \R_{\geq 0}$ such that $\xi_{\tau}(\tau) \in \mathcal{S}$. 
\hfill $\lrcorner$
\end{theorem}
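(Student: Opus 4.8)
The plan is to treat this theorem as the consolidation of the derivations immediately preceding it, with the only genuinely new content being the explicit integration of the closed-loop dynamics. First I would assemble the optimal strategies. The evader strategy is already fixed by \eqref{eq:opt_u_e_star} as $u_{e_\tau}^*(\tau)=-\sign(x_0)$; substituting this into the pursuer expression \eqref{eq:up_star1} replaces $\omega_e u_{e_\tau}^*(\tau)\tau$ by $-\sign(x_0)\omega_e\tau$ and immediately gives $u_{p_\tau}^*(\tau)=\pi+\theta_{\xi_0}-\sign(x_0)\omega_e\tau$, the claimed pursuer control. The constancy of $u_{e_\tau}^*$ near $\tau=0$ was established through the sign of $\sigma_\tau$ in the argument leading to \eqref{eq:opt_u_e_star}, so the closed-form expressions below are the candidate valid on the branch where this sign is retained; the endpoint of that branch (where $\sigma_\tau$ changes sign, triggering a region transition) is deferred to the later sections.

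Second, I would read off the adjoint trajectory \eqref{eq:solution_adjoint_variables_p1} from the general solution \eqref{eq:adjoint_variables}. Inserting $u_{e_\tau}^*(\tau)=-\sign(x_0)$ turns the argument $\theta_{\xi}+\omega_e u_{e_\tau}^*\tau$ into $\theta_{\xi_0}-\sign(x_0)\omega_e\tau$, while $u_{p_\tau}^*(0)=\pi+\theta_{\xi_0}$ from the first step collapses the common denominator via $v_p\cos(\theta_{\xi_0}-u_{p_\tau}^*(0))-\cos(\theta_{\xi_0})v_e=v_p\cos(-\pi)-\cos(\theta_{\xi_0})v_e=-v_p-\cos(\theta_{\xi_0})v_e$, which is nonzero (indeed positive) on $\text{UP}$ so that $\bp_\tau$ is well defined. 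This step is a purely algebraic substitution requiring no further argument.

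Third, and this is the crux, I would obtain the state trajectory \eqref{eq:solutions_usable_part} by integrating the backward dynamics $\mathring{\xi}_\tau=-f(\xi_\tau,u_{e_\tau}^*,u_{p_\tau}^*)$ (recall $\tfrac{\mathrm{d}}{\mathrm{d}\tau}=-\tfrac{\mathrm{d}}{\mathrm{d}t}$) with $\xi_\tau(0)=\xi_0$. Writing $s=\sign(x_0)$ and $\psi(\tau)=\theta_{\xi_0}-s\omega_e\tau$, and using $\sin(\pi+\psi)=-\sin\psi$, $\cos(\pi+\psi)=-\cos\psi$ in \eqref{eq:f_dynamics}, the closed loop becomes the linear time-varying system $\mathring{\xi}_\tau=s\omega_e J\,\xi_\tau+[\,v_p\sin\psi,\ v_e+v_p\cos\psi\,]^\top$ with $J=\bigl[\begin{smallmatrix}0&-1\\1&0\end{smallmatrix}\bigr]$. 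The homogeneous part is rotation at rate $s\omega_e$; the constant component $v_e$ of the forcing integrates against this rotation to give the oscillatory terms $\tfrac{v_e}{\omega_e}(\cos(\omega_e\tau)-1)$ and $\tfrac{v_e}{\omega_e}\sin(\omega_e\tau)$, whereas the $v_p$ component co-rotates with the homogeneous flow and therefore resonates, producing the secular factor $(\rho+\tau v_p)$ multiplying $\sin\psi$ and $\cos\psi$. Rather than grind through variation of parameters, I would present the result by verification: differentiate the stated right-hand side of \eqref{eq:solutions_usable_part}, confirm it reproduces the vector field above, and check $\xi_\tau(0)=[\rho\sin\theta_{\xi_0},\ \rho\cos\theta_{\xi_0}]^\top=\xi_0$; uniqueness of solutions (Remark \ref{rem:continuity_of_solutions}) then identifies it as the sought trajectory.

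The main obstacle is precisely this resonant structure: because the $v_p$ forcing rotates at exactly the homogeneous frequency, direct integration generates the $\tau v_p$ secular growth, which is easy to misplace by a sign or a factor of $\omega_e$. The verification route sidesteps this entirely and reduces the step to a mechanical differentiation, at the cost of being nonconstructive. The one supporting point meriting care is the self-consistency of the constant-control assumption underlying \eqref{eq:adjoint_variables}: substituting \eqref{eq:solution_adjoint_variables_p1} and \eqref{eq:solutions_usable_part} into $\sigma_\tau=p_{x_\tau}y_\tau-p_{y_\tau}x_\tau$ and simplifying (the $(\rho+\tau v_p)\sin\psi\cos\psi$ terms cancel, consistent with Lemma \ref{lem:sigma_dot} and $\mathring{\sigma}_\tau=p_{x_\tau}v_e$) shows $\sigma_\tau$ retains the sign dictated by $-\sign(x_0)$ on a neighborhood of $\tau=0$, confirming via \eqref{eq:switch_function} that the assumed constant evader control is admissible on the branch for which \eqref{eq:solutions_usable_part} is asserted.
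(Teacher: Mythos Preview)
Your proposal is correct and follows essentially the same approach as the paper: assemble the optimal controls from \eqref{eq:opt_u_e_star} and \eqref{eq:up_star1}, specialize \eqref{eq:adjoint_variables} to obtain $\bp_\tau$, and then integrate the closed-loop backward dynamics from the terminal condition on $\text{UP}$. The only minor difference is that the paper obtains \eqref{eq:solutions_usable_part} by direct analytic solution of the linear ODE (via symbolic computation) whereas you propose verification by differentiation and uniqueness, which is an equally valid and arguably cleaner presentation; your additional self-consistency check on $\sigma_\tau$ is not present in the paper's proof but is a welcome clarification.
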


\begin{proof}
The proof follows immediately from the derivations in this section.
The optimal strategies for the evader and pursuer have been derived in \eqref{eq:opt_u_e_star}
and \eqref{eq:up_star1}, respectively.
With $u_{e_\tau}^*(\cdot)$ and $u_{p_\tau}^*(\cdot)$ defined, $\bp_\tau(\cdot)$ follows from \eqref{eq:adjoint_variables}. 
Using the evader and pursuer strategies in the dynamics \eqref{eq:dynamics} in backward time leads to the ordinary differential equation
\begin{align}
    \mathring{\xi}(\tau) = -f(\xi(\tau),u_e^*,u_p^*(\tau)) = \left[ 
    \begin{array}{c}
        -\sign(x_0) \omega_e y  - v_p \sin (\pi+\theta_{\xi_{0}} -\sign(x_0) \omega_e  \tau )\\
        \sign(x_0)\omega_e x  + v_e - v_p \cos (\pi+\theta_{\xi_0} -\sign(x_0) \omega_e  \tau )
    \end{array}
    \right],
\end{align}
which can be solved analytically,\footnote{The differential equation has been solved in MATLAB.} leading to the expression \eqref{eq:solutions_usable_part} and completing the proof.
\end{proof}

Solutions characterized in Theorem \ref{thm:solutions_usable_part} and emanating from the usable part in backward time for the parameters $v_p=1$, $v_e=2$, $\rho=1$ and $\omega_e=2$ are shown in Figure \ref{fig:solution_usable_part}.
In particular, Theorem \ref{thm:solutions_usable_part} corresponds to part (A) in
Figure \ref{fig:solution_game_degree_intro_new}. We next focus on part (B) in Figure \ref{fig:solution_game_degree_intro_new}. 
\begin{figure}[htb!]
    \centering
    \includegraphics[width = 0.4\columnwidth]{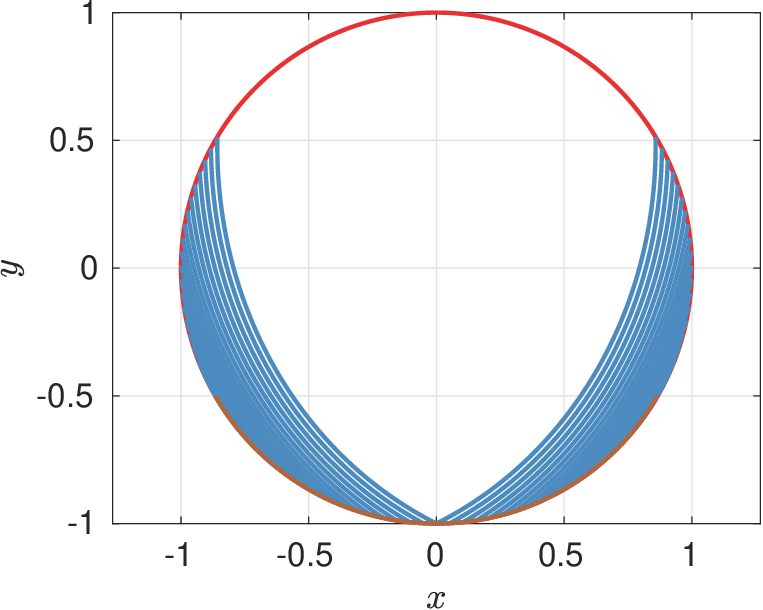}
    \caption{Solutions of the game of degree emanating from the usable part in backward time characterized through Theorem \ref{thm:solutions_usable_part} for the parameters $v_p=1$, $v_e=2$, $\rho=1$ and $\omega_e=2$.}
    \label{fig:solution_usable_part}
\end{figure}

\subsection{Trajectories Converging to $y$-axis in Finite Time}

We now derive solutions highlighted through (B) in Figure \ref{fig:solution_game_degree_intro_new}.
In Lemma \ref{lem:strategy_on_y_axis} we have established solutions for games starting on the set $\xi_0 \in \{ \xi \in \mathcal{S}|x =0, y<0\}$.
In particular, from \eqref{lem:strategy_on_y_axis}
we know that 
\begin{align}
    V(\xi_0) = \frac{\rho+y_0}{v_e-v_p}, \qquad \xi_0 \in \{ \xi \in \mathcal{S}|x =0, y<0\}.
\end{align}
We use this fact to define a new auxiliary game of degree, ending on the $y$-axis and considering the function $G:\mathcal{S}\rightarrow\R$,
\begin{align}
    G_y(\xi) = \frac{\rho+y}{v_e-v_p} \label{eq:terminal_cost_y}
\end{align}
as a terminal cost. 

\begin{problem} \label{prob:game_of_degree2}
Consider the dynamics \eqref{eq:dynamics} with state $\xi\in \R^2$, inputs $u_e\in [-1,1]$ and $u_p\in (-\pi,\pi]$, and defined through parameters $\omega_e,v_e,v_p\in \R_{>0}$ with $v_p<v_e$. Additionally, consider running costs $L(\xi,u_e,u_p)= 1$ and terminal costs defined in \eqref{eq:terminal_cost_y}.
Then we define the (auxiliary) game of degree
\begin{align}
\begin{split}
V_y(\xi_0) = \min_{u_e} \max_{u_p} & \int_0^T L(\xi,u_e,u_p) \, \mathrm{d}t + G_y(\xi(T)) \\
\text{subject to } \quad & \dot{\xi}(t) = f(\xi(t), u_e(t), u_p(t)), \quad \xi(0) = \xi_0, \\
& u_e(t) \in [-1,1], \quad t \in [0,T],\\
& u_p(t) \in \mathbb{R}, \quad t \in [0,T],\\
& \xi(0) \in \R^2, \quad \xi(T)^\top \left[\begin{smallmatrix}
    1 \\
    0
\end{smallmatrix}\right] = 0,
\end{split} \label{eq:problem_constraints2}
\end{align}
ending on the $y$-axis.
\hfill $\triangleleft$
\end{problem}

Note that the auxiliary game of degree \eqref{eq:problem_constraints2} admits the same Hamiltonian \eqref{eq:Ham} and the same adjoint equation \eqref{eq:adjoint_equation} as the game of degree \eqref{eq:problem_constraints}.
Solutions of the auxiliary game of degree \eqref{eq:problem_constraints2} ending on the $y$-axis can be concatenated with the solution \eqref{lem:strategy_on_y_axis} to recover solutions of \eqref{eq:problem_constraints}.

\begin{theorem} \label{thm:opt_strat_game2}
Consider the (auxiliary) game of degree defined in Problem \ref{prob:game_of_degree2}. Let $\xi_0\in \{\xi \in \mathcal{S}| x= 0, \ y<0 \}$ denote an initial condition in backward time. 
Then the optimal pursuer strategy is given by
\begin{align}
    u_{p_\tau}^*(\tau) &= u_{e_\tau}^*(\tau) \omega_e \tau. \label{eq:opt_u_p_prob2}
\end{align}
Moreover, for constant $u_{e_\tau}^* \in \{-1,1\}$, solutions of
\eqref{eq:dynamics} in backward time
with $\xi_0 \in \{\xi \in \mathcal{S}| x=0, \, y<0\}$ are given by
\begin{align}
    \xi_{\tau}(\tau;\xi_0,u_{e_\tau}^*,u_{p_\tau}^*(\tau))
    &=
    \begin{bmatrix}
        u_{e_\tau}^* (y_0 \sin(\tau \omega_e) + \frac{v_e}{\omega_e} - \tau v_p \sin(\tau \omega_e) -\frac{v_e}{\omega_e}\cos(\tau \omega_e)) \\
        y_0 \cos(\tau \omega_e) + \frac{v_e}{\omega_e} \sin(\tau \omega_e) - \tau v_p \cos(\tau \omega_e)
    \end{bmatrix} \label{eq:sol_xP2}
\end{align}
for all $\tau\in \R_{\geq 0}$ such that $u_{e_\tau}^*\in \sign(-x_{\tau}(\tau))$.
\hfill $\lrcorner$
\end{theorem}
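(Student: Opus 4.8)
The plan is to mirror the derivation of Theorem~\ref{thm:solutions_usable_part}, replacing the circular target and zero terminal cost by the $y$-axis terminal manifold $\{x=0\}$ and the terminal cost $G_y$ in \eqref{eq:terminal_cost_y}. Since Problem~\ref{prob:game_of_degree2} shares the Hamiltonian \eqref{eq:Ham} and the adjoint equation \eqref{eq:dot_p_equation_tau} with the original game, the candidate-strategy relations of Lemma~\ref{lem:opt_control_regular_E} continue to hold on regular parts. The proof therefore reduces to three steps: (i) pinning down the initial adjoint value $\bp_\tau(0)$ at the $y$-axis via transversality together with the vanishing-Hamiltonian condition, (ii) solving the adjoint equation to read off $u_{p_\tau}^*$, and (iii) integrating the backward dynamics to obtain \eqref{eq:sol_xP2}.

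First I would apply the terminal (transversality) condition \eqref{eq:terminate_condition}, now with the target manifold being the $y$-axis, whose tangent vector is ${\bf t}(\xi)=[0,1]^\top$. Because $\nabla G_y(\xi)=[0,\tfrac{1}{v_e-v_p}]^\top$ from \eqref{eq:terminal_cost_y}, the condition $\bp^\top {\bf t}=\nabla G_y^\top {\bf t}$ fixes $p_{y_\tau}(0)=\tfrac{1}{v_e-v_p}$ while leaving $p_{x_\tau}(0)$ free. To determine $p_{x_\tau}(0)$, I would invoke Lemma~\ref{lem:strategy_on_y_axis} together with Remark~\ref{rem:u_p_star}: on the negative $y$-axis the pursuer points directly at the evader, so $u_{p_\tau}^*(0)=0$, and substituting this into \eqref{eq:opt_u_star} forces $\bp_\tau(0)=c[\sin 0,\cos 0]^\top$, hence $p_{x_\tau}(0)=0$ and $c=\tfrac{1}{v_e-v_p}$. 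A direct check that the Hamiltonian \eqref{eq:Ham} vanishes at $\tau=0$ (using $x=0$, $u_p=0$, and the fact that the $u_e$-term drops out) then confirms \eqref{eq:H_0_condition} and validates $\bp_\tau(0)=\tfrac{1}{v_e-v_p}[0,1]^\top$.

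Next, for constant $u_{e_\tau}^*\in\{-1,1\}$ the adjoint equation \eqref{eq:dot_p_equation_tau} is a pure rotation, so using the same solution structure as in \eqref{eq:adjoint_variables} I would propagate the initial value to obtain $\bp_\tau(\tau)=\tfrac{1}{v_e-v_p}[\sin(\omega_e u_{e_\tau}^*\tau),\cos(\omega_e u_{e_\tau}^*\tau)]^\top$. Comparing this with the alignment relation \eqref{eq:opt_u_star} and using $c>0$ immediately yields $u_{p_\tau}^*(\tau)=u_{e_\tau}^*\omega_e\tau$, which is \eqref{eq:opt_u_p_prob2}. Finally, substituting $u_{p_\tau}^*$ and the constant $u_{e_\tau}^*$ into the backward dynamics $\mathring{\xi}=-f(\xi,u_e^*,u_p^*)$ produces a linear (in $\xi$) ODE with sinusoidal forcing; integrating it from $\xi_\tau(0)=[0,y_0]^\top$ is the routine, MATLAB-assisted computation that yields \eqref{eq:sol_xP2}, which I would verify by checking $\xi_\tau(0)=[0,y_0]^\top$. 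The stated validity range follows by confirming that the resulting $x_\tau(\tau)$ keeps the sign consistent with $u_{e_\tau}^*\in\sign(-x_\tau(\tau))$ through the switching function \eqref{eq:sigma_function}, \eqref{eq:switch_function}.

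I expect the main obstacle to be step (i), namely correctly transferring the optimality machinery to the new terminal manifold, since here---unlike in Theorem~\ref{thm:solutions_usable_part}---the terminal cost $G_y$ is nonzero and the target is the $y$-axis rather than the circle $\mathcal{C}$. The two delicate points are: (a) recognizing that transversality pins only the tangential component $p_{y_\tau}(0)$, so that $p_{x_\tau}(0)$ must instead be supplied by the optimal-pursuer relation \eqref{eq:opt_u_star} evaluated at the known boundary strategy $u_{p_\tau}^*(0)=0$ from Lemma~\ref{lem:strategy_on_y_axis}; and (b) checking sign-consistency of the switching function so that the constant-$u_{e_\tau}^*$ assumption is self-justifying on the claimed interval. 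Once $\bp_\tau(0)$ is fixed, the remainder is a direct transcription of the earlier argument together with an elementary integration.
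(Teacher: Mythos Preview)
Your proposal is correct and follows essentially the same route as the paper's proof: transversality on the $y$-axis to fix $p_{y_\tau}(0)=\tfrac{1}{v_e-v_p}$, the rotation solution of the adjoint equation, the alignment relation \eqref{eq:opt_u_star} to read off $u_{p_\tau}^*$, and then direct integration of the backward dynamics to obtain \eqref{eq:sol_xP2}. The only noteworthy difference is how $p_{x_\tau}(0)=0$ is obtained: the paper simply posits it as an \emph{ansatz} and then uses the vanishing-Hamiltonian condition \eqref{eq:H_0_condition} at $\tau=0$ both to select the branch $u_{p_\tau}^*(0)=0$ (rather than $\pi$) and to confirm the ansatz, whereas you first import $u_{p_\tau}^*(0)=0$ from Lemma~\ref{lem:strategy_on_y_axis} and then deduce $p_{x_\tau}(0)=0$ via \eqref{eq:opt_u_star}, using $H=0$ only as a check. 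Your branch selection via $c>0$ (the pursuer's maximization) is equivalent to the paper's $H=0$ argument. For the sign consistency of $u_{e_\tau}^*$, the paper computes $\mathring{x}_\tau(0)$ directly from \eqref{eq:differential_equation_solP2} rather than going through the switching function, but this is the same check in concrete form.
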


Before we prove this result, note that for 
\begin{align}
    \xi_0=\lim_{\theta_{\xi_0}\rightarrow \pi} \left[ \begin{array}{c}
         \rho \sin(\theta_{\xi_0})  \\
          \rho \cos(\theta_{\xi_0})
    \end{array} \right] \label{eq:continuous_transition}
\end{align}
the strategies established in Theorem \ref{thm:solutions_usable_part} are consistent with the strategies in Theorem \ref{thm:opt_strat_game2}, i.e., in the limit, the strategies in 
Theorem \ref{thm:opt_strat_game2} 
and Theorem \ref{thm:solutions_usable_part} coincide.

\begin{proof}
As in the previous section, we use conditions 
\eqref{eq:terminate_condition} and
\eqref{eq:H_0_condition}
to derive a terminal condition for the adjoint variables.
With
\begin{align}
    \nabla G_y(\xi) = \left[ \begin{array}{c}
         0  \\
         \frac{1}{v_e-v_p} 
    \end{array}\right] \qquad \text{and} \qquad {\bf t}(\xi) = \left[ \begin{array}{c}
         0  \\
         1 
    \end{array}\right],
\end{align}
condition \eqref{eq:terminate_condition} implies that
\begin{align*}
    p_y=\bp^{\top} {\bf t}(\xi) = \nabla G_y(\xi)^\top {\bf t}(\xi) = 
         \frac{1}{v_e-v_p}.
\end{align*}

As a next step, we consider an \emph{ansatz} for $p_x$ where $p_x=0$. For $p_x=0$, the adjoint variables with constant $u_{e_\tau}^*\in \{-1,1\}$ satisfy 
\begin{align}
\begin{split}
    \bp_{\tau}(\tau) &= \left[ \begin{array}{rr}
        \cos(-\omega_e u_{e_\tau}^* \tau) & -\sin(-\omega_e u_{e_\tau}^* \tau) \\
        \sin(-\omega_e u_{e_\tau}^*\tau)  & \cos(-\omega_e u_{e_\tau}^*\tau) 
    \end{array}\right]
    \left[\begin{array}{c}
         0  \\
         \tfrac{1}{v_e-v_p}  
    \end{array} \right] \\
    &= \frac{1}{v_e-v_p} \left[\begin{array}{r}
         -\sin (- \omega_e u_{e_\tau}^* \tau)  \\
         \cos (- \omega_e u_{e_\tau}^* \tau)  
    \end{array} \right] 
    = \frac{1}{v_e-v_p} \left[\begin{array}{r}
         \sin (\omega_e u_{e_\tau}^* \tau)  \\
         \cos (\omega_e u_{e_\tau}^* \tau)  
    \end{array} \right]. 
\end{split} \label{eq:adjoint_solution_P2}
\end{align}
Using the same arguments as in
\eqref{eq:u_p_cases}, for $p_x=0$,
we conclude that $u_{p_\tau}^*(\tau) = \omega_e u_{e_\tau}^* \tau$ or $u_{p_\tau}^*(\tau) = \pi + \omega_e u_{e_\tau}^* \tau$.

With the derivations so far, condition \eqref{eq:H_0_condition} on the Hamiltonian implies that
\begin{align*}
    0&=\bp_{\tau}(0)^\top f(\xi_{\tau}(0),u_{e_\tau}^*,u_{p_\tau}(0)) +L(\xi(0),u_{e_\tau}^*,u_{p_\tau}^*(0)) \\
    &= p_{x_\tau}(0) (-\omega_e y_{\tau}(0) u_{e_\tau}^* + v_p \sin (u_{p_\tau}^*(0))) + p_{y_\tau}(0) (\omega_e x(0) u_{e_\tau}^* - v_e + v_p \cos (u_{p_\tau}^*(0))) +1 \\
    &=  p_{y_\tau}(0) ( - v_e + v_p \cos (u_{p_\tau}^*(0))) +1 \\
    &= \frac{1}{v_e-v_p} (- v_e + v_p \cos(u_{p_\tau}^*(0) ) +1.
\end{align*}
Hence, 
\begin{align}
    u_p^*(\tau) = \omega_e u_{e_\tau}^* \tau,
\end{align}
which confirms that the \emph{ansatz} $p_x=0$ was correct.

As a next step we can focus on the dynamics
\eqref{eq:dynamics} in backward time
with $u_{e_\tau}^*\in\{-1,1\}$, $u_p^*(\tau) =\omega_e u_e^* \tau$ and terminal condition $\xi_0 \in \{\xi \in \mathcal{S}| x=0, \, y<0\}$, i.e., we can focus on the differential equation 
\begin{align}
    \mathring{\xi}
    &=
    \begin{bmatrix}
        \omega_e y u_{e_\tau}^* - v_p \sin (\omega_e u_{e_\tau}^* \tau)\\
        -\omega_e x u_{e_\tau}^* + v_e - v_p \cos (\omega_e u_{e_\tau}^* \tau)
    \end{bmatrix}, \qquad \xi_0 \in \left[\begin{array}{c}
         0 \\
         y_0 
    \end{array}\right].  \label{eq:differential_equation_solP2}
\end{align}
Here, the $x$-component of the differential equation \eqref{eq:differential_equation_solP2} satisfies $\mathring{x}(0)<0$ for $u_{e_\tau}(\tau)=1$ and $\mathring{x}(0)>0$ for $u_{e_\tau}=-1$ for all $\xi_0 \in \{\xi \in \mathcal{S}| x=0,\, y<0\}$. Accordingly, $u_e^*=-1$ is optimal for $x>0$ while $u_e^*=1$ is optimal for $x<0$.

The initial value problem \eqref{eq:differential_equation_solP2} can be solved analytically\footnote{Here, the solution is obtained through MATLAB.} and the solution is given by \eqref{eq:sol_xP2}, which completes the proof.
\end{proof}

\begin{remark}
    By construction, for all $\xi_0 \in \{\xi \in \mathcal{S}|x=0,\ y<0\}$, the pair $(\xi_\tau(\cdot),\bp_\tau(\cdot))$ defined in \eqref{eq:adjoint_solution_P2} and \eqref{eq:sol_xP2}, respectively, satisfy \eqref{eq:Ham_cond1} for all $\tau\in \R_{\geq 0}$ such that $\xi(\tau)\in \{\xi\in\mathcal{S}| x u_e^* \leq 0\}$.
    \hfill $\diamond$
\end{remark}

Solutions characterized through Theorem \ref{thm:opt_strat_game2} for different parameter selection are shown in Figure \ref{fig:solution_new_game}.
For Figure \ref{fig:solution_new_game}, right, we observe that (in combination with Theorem \ref{thm:solutions_usable_part}) solutions fill the game set while in Figure \ref{fig:solution_new_game} left, there is still a gap on the $y$-axis. This gap, denoted by (C) in Figure \ref{fig:solution_game_degree_intro_new}, arises when the speed ratio satisfies $\frac{v_p}{v_e} > \frac{1}{2}$, and will be addressed in the next section.
\begin{figure}[htb!]
    \centering
    \includegraphics[width = 0.45\columnwidth]{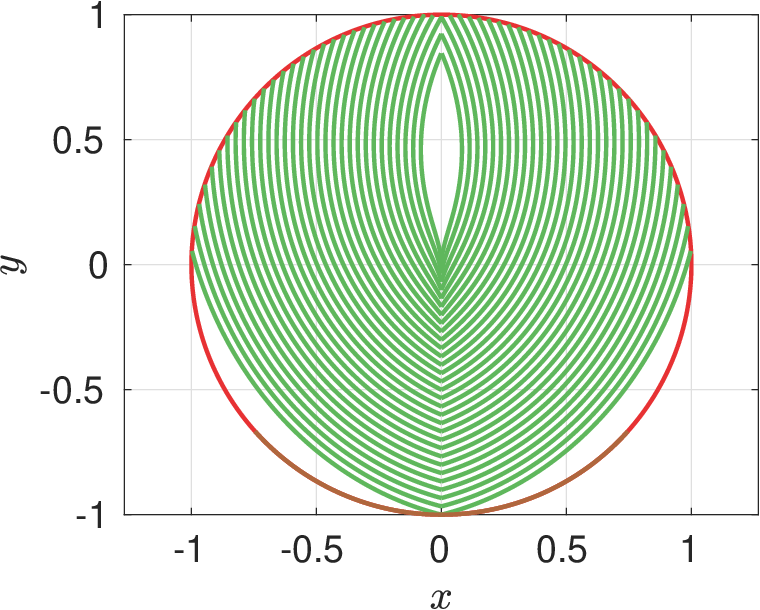} 
    \hspace{0.4cm}
    \includegraphics[width = 0.45\columnwidth]{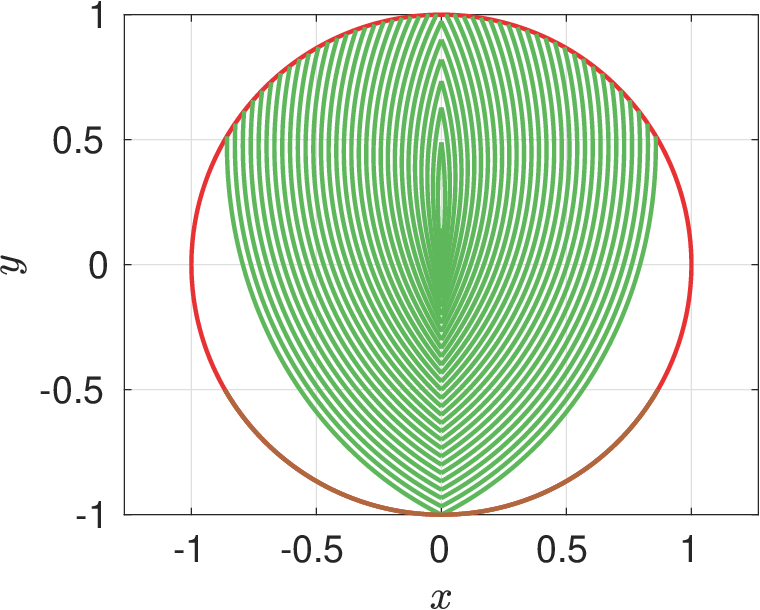}
    \caption{Trajectories characterized through Theorem \ref{thm:opt_strat_game2} using the parameter selection $\omega_e=2$, $v_p=1$, $\rho=1$ and $v_e=1.5$ (left) and $v_e=2$ (right), respectively.}
    \label{fig:solution_new_game}
\end{figure}
From Figure \ref{fig:solution_new_game} and Theorem \ref{thm:opt_strat_game2}, we additionally observe that $\{\xi \in \mathcal{S}|x=0,\ y<0\}$ is a universal surface \cite[Sec. 9.5.2]{Lewin2012}. We also observe that the set $\{\xi \in \mathcal{S}|x=0,\ y>0\}$ or part of the set is a dispersal surface \cite[Sec. 9.5.2]{Lewin2012}.

\begin{remark}
Note that we have not discussed solutions starting in $\xi_0=0$ yet. Through continuity arguments, it can be concluded that the solution can be characterized as in Lemma \ref{lem:strategy_on_y_axis}.
\hfill $\diamond$
\end{remark}

\subsection{Extension of Solution \eqref{eq:sol_xP2} for $y\geq 0$} \label{sec:weird_part}

In the preceding section we have argued how solutions starting on the set $\{\xi \in \mathcal{S}|x=0,\ y\leq 0\}$ in backward time can be constructed. In this section, we expand the discussion to $\xi_0 \in \{\xi \in \mathcal{S}|x=0,\ y> 0\}$ and derive corresponding solutions in backward time.
We extend solution \eqref{eq:sol_xP2} derived in Theorem \ref{thm:opt_strat_game2} based on the pursuer's strategy \eqref{eq:opt_u_p_prob2}.
Here, we focus on the case $u_{e_{\tau}}^*(\tau)=-1$, noting that, due to symmetry (see Corollary \ref{cor:symmetry}), the case $u_{e_{\tau}}^*(\tau)=1$ can be discussed in the same way.

As a first step, note that for $\xi_0 \in\{\xi \in \mathcal{S}|x=0\}$, $u_{e_\tau}^*=-1$ and $u_{p_\tau}^*(0)=0$, the right-hand side of the dynamical system \eqref{eq:dynamics} in backward time reduces to
\begin{align}
 \mathring{\xi}_{\tau} =   -f(\xi_0, -1, 0)
    &=
    \begin{bmatrix}
        -\omega_e y_0  \\
         v_e - v_p
    \end{bmatrix}. \label{eq:direction_y0}
\end{align}
Depending on the sign of the initial value $y_0$, we observe that the corresponding solution satisfies
\begin{align}
\left\{
\begin{array}{ll}
    \mathring{x}(0;\xi_0,-1,0) <0,& \qquad \text{for } y_0 >0,\\
    \mathring{x}(0;\xi_0,-1,0) =0,& \qquad \text{for } y_0 =0,\\
    \mathring{x}(0;\xi_0,-1,0) >0,& \qquad \text{for } y_0 <0
\end{array} \right. \label{eq:sign_based_on_y0}
\end{align}
(and where we have used the derivative for illustration purposes instead of the more general argument in terms of the direction  $-\omega_e y_0$ in \eqref{eq:direction_y0}).
Due to this fact, we have only discussed solutions corresponding to $y_0<0$ in Theorem \ref{thm:opt_strat_game2}. However, \eqref{eq:sol_xP2} is also well defined for $y_0\geq 0$ and solutions satisfy condition \eqref{eq:H_0_condition} for all $\tau \in \R_{\geq 0}$ by construction. 
Solutions corresponding to the parameters $v_p=1$, $v_e=1.5$ and $\rho=1$ and the two values $\omega_e=1$ and $\omega_e=2$, respectively, are shown in  Figure \ref{fig:solution_new_game2w_1}. 
\begin{figure}[htb!]
    \centering
    \includegraphics[width = 0.23\columnwidth]{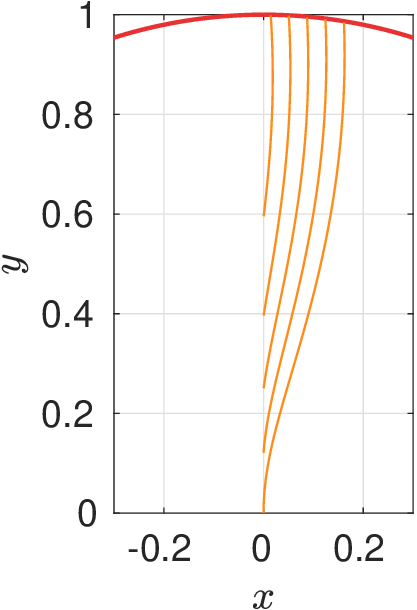}
    \includegraphics[width = 0.23\columnwidth]{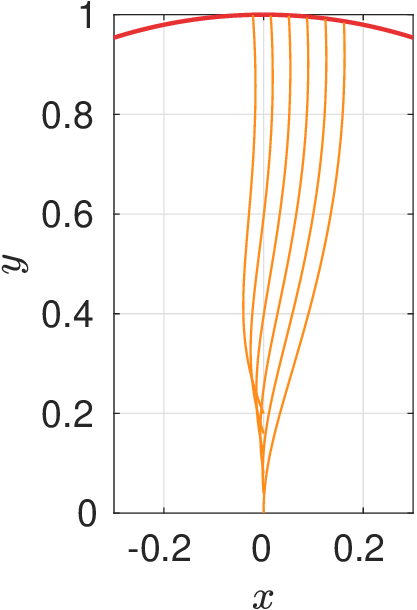}
\hfill
    \includegraphics[width = 0.23\columnwidth]{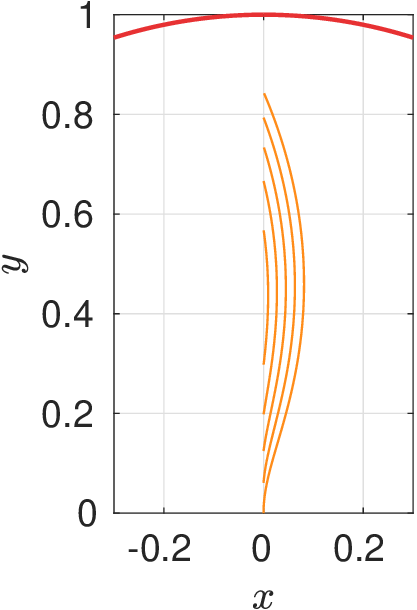}
    \includegraphics[width = 0.23\columnwidth]{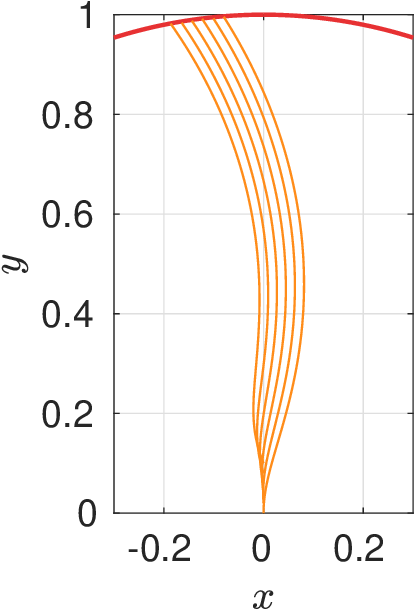}
    \caption{Trajectories characterized through Theorem \ref{thm:opt_strat_game2} starting on the $y$-axis with $y_0>0$. Here, we focus on $u_{e_\tau}^*(\tau)=-1$ and use the parameter selection $v_p=1$, $\rho=1$, $v_e=1.5$ and $\omega_e=1$ (left) and $\omega_e=2$ (right), respectively. Only the components of the solutions satisfying $x_\tau(\tau)\geq 0$ are relevant for the solution of the game of degree.}
    \label{fig:solution_new_game2w_1}
\end{figure}

As indicated in \eqref{eq:sign_based_on_y0}, for $y_0>0$, solutions initially enter the half-space $\{\xi \in \mathcal{S}|x<0 \}$. While some solutions remain in $\{\xi \in \mathcal{S}|x<0 \}$ indefinitely, we focus on parts that change the sign of the $x$-component. Recalling the explicit expression of the solution \eqref{eq:sol_xP2}, we investigate zeros of the equation 
\begin{align}
0= u_{e_\tau}^*(y_0 \sin(\tau \omega_e) + \frac{v_e}{\omega_e} - \tau v_p \sin(\tau \omega_e) -\frac{v_e}{\omega_e}\cos(\tau \omega_e)), 
\end{align}
or equivalently (since $u_{e_\tau}^*=-1$ by assumption), zeros of the expression
\begin{align}
0= \left(\frac{y_0}{v_e} \omega_e - \tau \omega_e \frac{v_p}{v_e}\right) \sin(\tau \omega_e) + 1  -\cos(\tau \omega_e). \label{eq:arbitrary_zeroes_equation}
\end{align}
To simplify \eqref{eq:arbitrary_zeroes_equation} we introduce the scaled parameters
\begin{align}
    \tilde{\tau} =\tau \omega_e, \qquad 
    \tilde{y}_0 =\frac{\omega_e}{v_e} y_0 , \qquad v_r = \tfrac{v_p}{v_e} \label{eq:scaled_params}
\end{align}
and thus
  $0=(\tilde{\tau}  v_r-\tilde{y}_0) \sin(\tilde{\tau}) -   1-\cos(\tilde{\tau})$. 
Note that this coordinate transformation reduces the number of parameters from five to three and \eqref{eq:scaled_params} corresponds to a positive linear scaling of the time argument $\tau$ and the initial condition $y_0$, and $v_r<1$ represents the speed ratio of the pursuer and the evader.
For a fixed speed ratio $v_r$, we define the function 
\begin{align}
    \Gamma_{v_r}(\tilde{\tau},\tilde{y}_0) =   (\tilde{\tau}  v_r-\tilde{y}_0) \sin(\tilde{\tau}) +   1-\cos(\tilde{\tau}). \label{eq:Gamma}
\end{align}
The number of zeros of the function $\Gamma_{v_r}:[0,2\pi) \times [0,\frac{\omega_e\rho}{v_e}] \rightarrow \R$ depends on the speed ratio $v_r$. In particular, while it follows directly that $\Gamma_{v_r}(0,0)=0$ for all $v_r \in (0,1)$, the ratio $v_r\leq\frac{1}{2}$ or $v_r> \frac{1}{2}$ decides if $\Gamma_{v_r}(\cdot,\cdot)$ has additional zeros. Figure \ref{fig:solution_new_game2w} shows contour lines of $\Gamma_{v_r}$ for different parameters $v_r$.
\begin{figure}[htb!]
    \centering
    \begin{overpic}[width = 0.32\columnwidth]{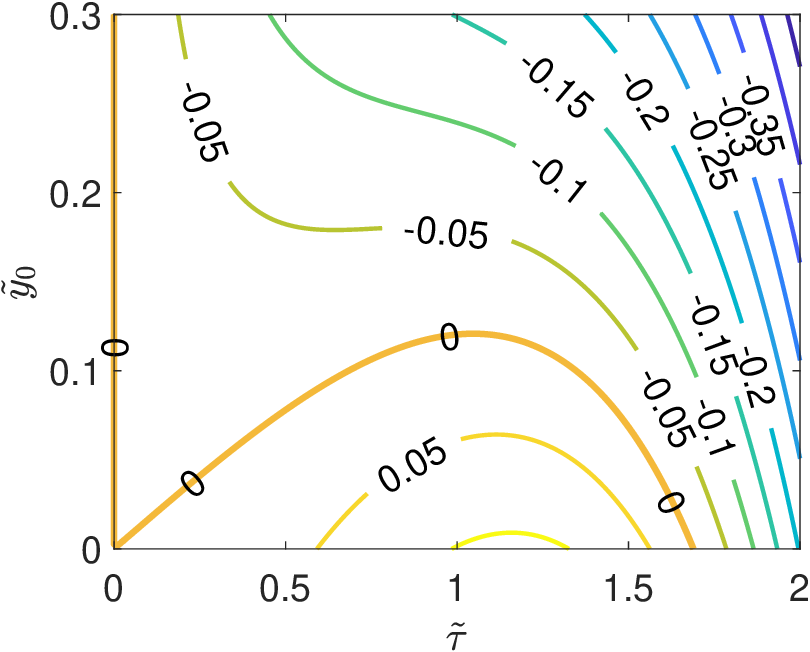}
    \put(57,37.5){$+$}
    \put(50,30.5){\footnotesize{$(\tilde{\tau}^\#,\tilde{y}_0^\#)$}}
    \end{overpic}
    \hfill
        \includegraphics[width = 0.32\columnwidth]{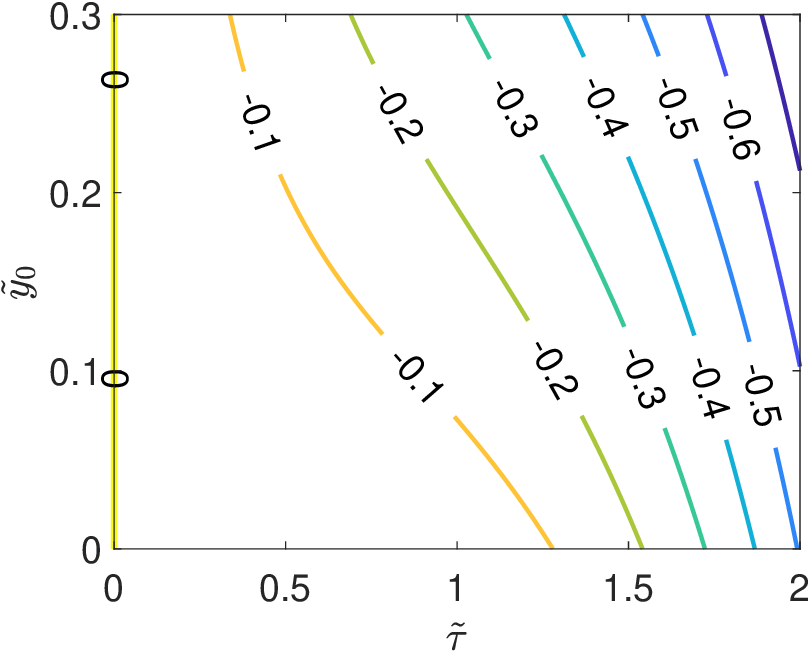}
        \hfill
        \includegraphics[width = 0.32\columnwidth]{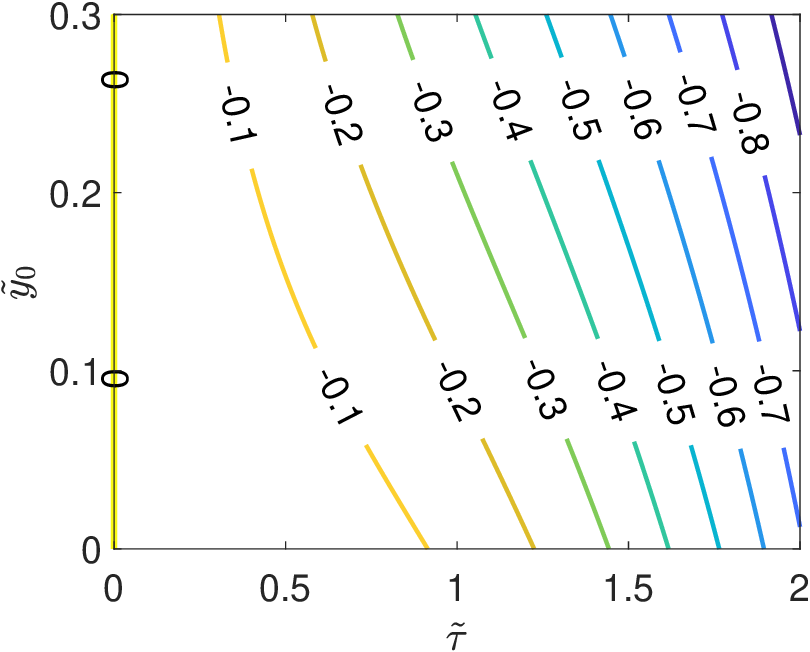}
    \caption{Contour lines of $\Gamma_{v_r}(\cdot,\cdot)$ for different velocity ratios $v_r=\frac{2}{3}$, $v_r=\frac{1}{2}$, $v_r=\frac{2}{5}$ from left to right.}
    \label{fig:solution_new_game2w}
\end{figure}
Here, $v_r$ is chosen as $v_r=\frac{v_p}{v_e}=\frac{1}{1.5}=\frac{2}{3}$, $v_r=\frac{v_p}{v_e}=\frac{1}{2}$ and $v_r=\frac{v_p}{v_e}=\frac{2}{5}$, respectively.
For $v_r\leq \frac{1}{2}$, the $x$-component of the solution \eqref{eq:sol_xP2} does not change its sign in the game set $\mathcal{S}$. For $v_r \geq \frac{1}{2}$ we observe the behavior shown in Figure \ref{fig:solution_new_game2w} on the left and in particular, there exists 
\begin{align}
    (\tilde{\tau}^\#,\tilde{y}_0^\#) \in [0,2\pi) \times [0,\tfrac{\omega_e \rho}{v_e}] \label{eq:Gamma_conditions2}
\end{align}
such that 
\begin{align}
    \Gamma_{v_r}(\tilde{\tau}^\#,\tilde{y}_0^\#)=0  \qquad \text{and} \qquad  \tfrac{\partial}{\partial \tilde{\tau}}\Gamma_{v_r}(\tilde{\tau}^\#,\tilde{y}_0^\#)=0. \label{eq:Gamma_conditions}
\end{align}
The point $(\tilde{\tau}^\#,\tilde{y}_0^\#)$ is highlighted in Figure \ref{fig:solution_new_game2w} on the left.
Moreover, for all $\tilde{y}_0^\times \in (0,\tilde{y}_0^\#)$ there exists a unique $\tilde{\tau}^\times \in (0,\tilde{\tau}^\#)$ with $\Gamma(\tilde{\tau}^\times,\tilde{y}_0^\times)=0$ and where the solution \eqref{eq:sol_xP2} changes its sign in the $x$-component (from negative to positive for $u_e^*=-1$).

Using the coordinate transformations \eqref{eq:scaled_params}, the pair $(\tilde{\tau}^\#,\tilde{y}_0^\#)$ can be equivalently written as $(\tau^\#,y_0^\#)=(\frac{\tilde{\tau}^\#}{\omega_e},\frac{v_e}{\omega_e}\tilde{y}_0^\#) \in (0,\frac{2\pi}{\omega_e}) \times (0,\rho)$. Similarly, $(\tilde{\tau}^\times,\tilde{y}_0^\times)$ can be expressed in terms of the original variables 
$\tau^\times = \frac{1}{\omega_e} \tilde{\tau}^\times \in (0,\tau^\#)$ 
and
$y_0^\times = \frac{v_e}{\omega_e}\tilde{y}_0^\times \in (0,y_0^\#)$, respectively. Based on the discussion in this section we can state the following result.

\begin{theorem} \label{thm:opt_strat_game3}
Consider the game of degree defined in Problem \ref{prob:game_of_degree} with $\frac{v_p}{v_e}> \frac{1}{2}$. Let $\xi_0\in \{\xi \in \mathcal{S}| x= 0 \}$ denote an initial condition in backward time. 
Then, for $u_{e_\tau}^* \in \{-1,1\}$ constant, components of the pursuer's strategy
\eqref{eq:opt_u_p_prob2} and components of the solution \eqref{eq:sol_xP2} that satisfy $u_{e_\tau}^*\in \sign(-x_{\tau}(\tau))$
for $\tau\in \R_{\geq 0}$ characterize components of the solution of Problem \ref{prob:game_of_degree} in backward time.
\hfill $\lrcorner$
\end{theorem}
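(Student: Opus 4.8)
The approach is to show that the curves obtained by extending \eqref{eq:sol_xP2} and \eqref{eq:opt_u_p_prob2} to arbitrary $\xi_0\in\{x=0\}$ are legitimate regular characteristics of Problem \ref{prob:game_of_degree}, restricted to the portion where the feedback self-consistency $u_{e_\tau}^*\in\sign(-x_\tau(\tau))$ holds. The crucial structural fact is that Problem \ref{prob:game_of_degree} and the auxiliary Problem \ref{prob:game_of_degree2} share the same dynamics \eqref{eq:dynamics}, running cost, Hamiltonian \eqref{eq:Ham}, and adjoint equation \eqref{eq:dot_p_equation_tau}, differing only in their terminal data. Hence any $(\xi_\tau,\bp_\tau,u_{e_\tau}^*,u_{p_\tau}^*)$ that solves the characteristic system (state dynamics, adjoint equation, $H\equiv 0$, and the saddle-point inequality \eqref{eq:Ham_cond1}) is simultaneously a characteristic of both problems; the theorem then follows from a direct verification of these conditions together with an argument that the retained portions are globally optimal for Problem \ref{prob:game_of_degree}.

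First I would note that the derivation of \eqref{eq:opt_u_p_prob2}, \eqref{eq:adjoint_solution_P2}, and \eqref{eq:sol_xP2} in the proof of Theorem \ref{thm:opt_strat_game2} is purely algebraic and never uses the sign of $y_0$; it relies only on the ansatz $p_x(0)=0$, the transversality value $p_y(0)=1/(v_e-v_p)$, and on $u_{e_\tau}^*$ being constant. Consequently the matrix-exponential adjoint \eqref{eq:adjoint_solution_P2} solves \eqref{eq:dot_p_equation_tau}, the explicit state \eqref{eq:sol_xP2} solves \eqref{eq:dynamics} in backward time, and the computation giving $H=0$ at $\tau=0$ holds verbatim for $y_0>0$; since $H$ is conserved along characteristics this yields \eqref{eq:H_0_condition} for all $\tau$, as already observed in the text.

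Next I would verify the minimax inequality \eqref{eq:Ham_cond1}. The pursuer's half is immediate from \eqref{eq:opt_u_star}: by \eqref{eq:adjoint_solution_P2} the vector $\bp_\tau$ is a positive multiple of $[\sin u_{p_\tau}^*,\cos u_{p_\tau}^*]^\top$, so $u_{p_\tau}^*=\omega_e u_{e_\tau}^*\tau$ maximizes the $u_p$-dependent term $v_p(p_x\sin u_p+p_y\cos u_p)$. For the evader's half I would insert \eqref{eq:adjoint_solution_P2} and \eqref{eq:sol_xP2} into the switching function $\sigma_\tau=p_{x_\tau}y_\tau-p_{y_\tau}x_\tau$ and simplify; the $y_0$-dependent terms cancel, leaving $\sigma_\tau$ a function of $\tau$ alone that carries the sign of $u_{e_\tau}^*$ on $(0,2\pi/\omega_e)$ (consistent with Lemma \ref{lem:sigma_dot}, since $\sigma_\tau(0)=0$ and $\mathring\sigma_\tau=p_{x_\tau}v_e$). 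Thus $u_{e_\tau}^*\in\sign(\sigma_\tau)$ holds along the entire extended characteristic, confirming the saddle-point condition as a local necessary condition irrespective of the sign of $x_\tau$.

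The remaining, and main, obstacle is global optimality: although the characteristic satisfies all local conditions everywhere, its costate equals $\nabla V$ for Problem \ref{prob:game_of_degree} only in its home half-plane, because in the opposite half-plane the genuinely optimal trajectory belongs to the symmetric $u_{e_\tau}^*=+1$ family guaranteed by Corollary \ref{cor:symmetry}. I would therefore argue that the portions satisfying $u_{e_\tau}^*\in\sign(-x_\tau)$ are precisely the non-overtaken segments, and locate them using the zero set of $\Gamma_{v_r}$ in \eqref{eq:Gamma}: for $v_r>\tfrac12$ the sign of the $x$-component of \eqref{eq:sol_xP2} reverses at the $\tau^\times(y_0)$ from \eqref{eq:arbitrary_zeroes_equation}, and the critical pair $(\tilde\tau^\#,\tilde y_0^\#)$ of \eqref{eq:Gamma_conditions} bounds the resulting family so that the retained segments tessellate region (C) without gaps or overlaps. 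Establishing this clean foliation — existence and uniqueness of $\tau^\times$ and its monotone dependence on $y_0$ up to the critical point — is the delicate part; once it is in place, the self-consistent segments are bona fide optimal characteristic components of Problem \ref{prob:game_of_degree} in backward time, as claimed.
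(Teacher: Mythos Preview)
Your proposal is correct and follows essentially the same route as the paper: extend the construction of Theorem~\ref{thm:opt_strat_game2} to $y_0>0$, verify that the resulting $(\xi_\tau,\bp_\tau,u_{e_\tau}^*,u_{p_\tau}^*)$ still satisfy the state equation, the adjoint equation, and $H\equiv 0$, and then use the feedback self-consistency condition $u_{e_\tau}^*\in\sign(-x_\tau)$ to select the admissible portions via the zero structure of $\Gamma_{v_r}$. The paper's own proof is a two-sentence pointer back to Theorem~\ref{thm:opt_strat_game2} and the discussion of $\Gamma_{v_r}$ in Section~\ref{sec:weird_part}; your version makes explicit several steps the paper leaves implicit, notably the computation showing that $\sigma_\tau$ is independent of $y_0$ and carries the sign of $u_{e_\tau}^*$, and the acknowledgment that the clean-foliation argument for region~(C) is where the real work lies (the paper likewise defers this to the numerical analysis of $\Gamma_{v_r}$, cf.\ Remark~\ref{rem:implicitCondition}).
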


\begin{proof}
The statement follows from the discussion in this section and from Theorem \ref{thm:opt_strat_game2}. In particular, by construction and as shown in Theorem \ref{thm:opt_strat_game2}, the trajectories \eqref{eq:sol_xP2}, \eqref{eq:adjoint_solution_P2} satisfy \eqref{eq:adjoint_equation} for all $\tau \in \R_{\geq 0}$ and the condition $u_{e_\tau}^*\in \sign(-x_{\tau}(\tau))$ ensures that only the correct components of the trajectory \eqref{eq:sol_xP2} are considered as solutions
\end{proof}

A more detailed representation of the solution in Theorem \ref{thm:opt_strat_game3} in terms of the initial conditions is given in Appendix \ref{ap:characterization_weird_trajectories}.
Here, we avoid lengthy expressions and tedious but straightforward calculations and instead end the section with final remarks.

\begin{remark}
    \label{rem:implicitCondition}
    Unfortunately, an analytic solution of
    $\Gamma_{v_r}(\tilde{\tau},\tilde{y}_0)= 0$, where $\Gamma_{v_r}$ is defined in \eqref{eq:Gamma}, is not available.
    Thus, $\Gamma_{v_r}(\tilde{\tau},\tilde{y}_0)= 0$ should be treated as an implicit condition. 
    However, numerically it can be shown that $\Gamma_{v_r}(\tilde{\tau},\tilde{y}_0)= 0$ only admits solutions $(\tilde{\tau},\tilde{y}_0) \in \R_{>0}^2$ if $v_r > \frac{1}{2}$. Thus, part $\text{(C)}$ of the solution in Figure \ref{fig:solution_game_degree_intro_new} is only present in the case that $v_r = \frac{v_p}{v_e}>\frac{1}{2}$. 
    Similarly, while we cannot solve the equations \eqref{eq:Gamma_conditions} analytically, we can solve it numerically. Then it follows with $y_0^\#=\frac{v_e}{\omega_e}\tilde{y}_0^{\#}$ that
 $\{\xi \in \mathcal{S}|x=0,\ y< y_0^\#\}$ is a universal surface \cite[Sec. 9.5.2]{Lewin2012} and $\{\xi \in \mathcal{S}|x=0,\ y > y_0^\# \}$ is a dispersal surface \cite[Sec. 9.5.2]{Lewin2012}.
    \hfill $\diamond$
\end{remark}

\begin{remark}
    In light of Remark \ref{rem:implicitCondition} and the statement at the end of Section \ref{sec:solutions_ending_on_y_axis}, we see that all singular arcs in the solution to Problem \ref{prob:game_of_degree} lie on the $y$-axis.
    This arrangement of singular arcs is considerably simpler than in the classic surveillance-evasion game of \cite{Lewin1975} with an agile evader but turn-limited pursuer.
    However, the arrangement's dependence on the speed ratio is surprising since such dependence is not present in the solutions to analogous pursuit-evasion and collision-avoidance games with agile pursuers and turn-limited evaders (cf. \cite{Exarchos2014,Exarchos2015,Molloy2020}).
    \hfill $\diamond$
\end{remark}

\begin{remark}
Recall that the point  $y_0^\#=\frac{v_e}{\omega_e}\tilde{y}_0^{\#}$ on the $y$-axis separates the universal surface from the dispersal surface of the game. In the case that $\frac{v_p}{v_e}\leq \frac{1}{2}$, i.e., in the case that \eqref{eq:Gamma_conditions} does not admit a solution satisfying \eqref{eq:Gamma_conditions2}, the switching point is simply given by $y_0^\#=0$. This allows a bearing-only implementation of the evader's optimal strategy
\begin{align}
    u_{e_\tau}^*\in \left\{ \begin{array}{cl}
        \{-1\} & \text{if } \theta \in (0,\pi)  \\
        \{1\}  & \text{if } \theta \in (-\pi,0)  \\
        \{0\} & \text{if } \theta =\pi  \\
        \{-1,1\} & \text{if } \theta =0  
    \end{array} \right.
\end{align}
where $\theta$ is defined as $\theta=\arctan_2(x,y)$ for $\xi \neq 0$.
This characterization is not possible in the case $\frac{v_p}{v_e}> \frac{1}{2}$ where $y_0^\#>0$. Here, the optimal evader strategy additionally depends on distance measurements on the $y-$axis:
\begin{align}
    u_{e_\tau}^*\in \left\{ \begin{array}{cl}
        \{-1\} & \text{if } \theta \in (0,\pi)  \\
        \{1\}  & \text{if } \theta \in (-\pi,0)  \\
        \{0\} & \text{if } \theta =\pi \text{ or } (\theta =0 \text{ and }|\xi|\leq y_0^\#) \\
        \{-1,1\} & \text{if } \theta =0 \text{ and } |\xi|> y_0^\#.
    \end{array} \right.
\end{align}
\hfill $\diamond$
\end{remark}

\section{Illustrative Analysis of Solution of Game of Degree}
\label{sec:analysis_of_solution}

In this section we discuss properties of the solution of the game of degree derived in Section \ref{sec:solution_game_of_degree}, and its connections to recent surveillance-evasion problems \cite{Molloy2023,Weintraub2024,VonMoll2023} that have been posed and solved as optimal-control problems.

\subsection{Specialization to Minimum-Time Escape from a Circular Region for a Dubins Car}

The game of degree discussed in this paper covers the optimal-control problem considered in \cite{Molloy2023, Weintraub2024} of controlling a turn-limited evader so as to escape from a stationary circular region in minimum time as a special case.
Specifically, the solutions derived in \cite{Molloy2023, Weintraub2024} are recovered when the pursuer is stationary, thus with $v_p=0$. 
The solution for this special case of the game is visualized in Figure \ref{fig:stationary}. 
Figure \ref{fig:stationary} shows that the evader's optimal strategy is to turn away from the (stationary) pursuer until either the pursuer is sufficiently far away, or the pursuer is directly behind the evader so that it can switch to traveling in a straight line until the pursuer is sufficiently far away.
\begin{figure}[htb!]
    \centering
    \includegraphics[width = 0.49\columnwidth]{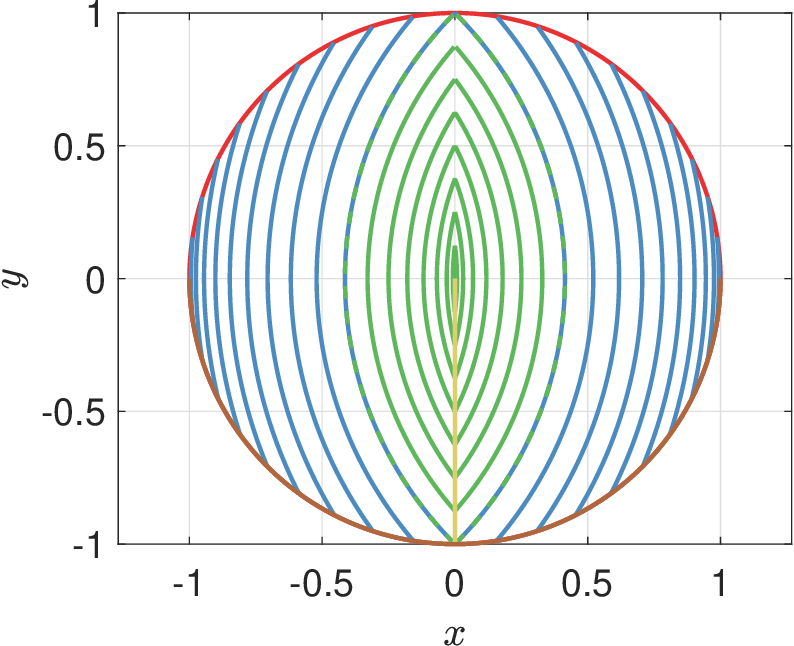}
    \caption{Solution of the game of degree for $\omega_e=1$, $v_p=0$, $\rho=1$ and $v_e=1$.}
    \label{fig:stationary}
\end{figure}
Here, the solution is not only symmetric with respect to the $y$-axis, but also shows symmetry properties with respect to the $x$-axis. Interestingly, the solution in \cite{Molloy2023}, \cite{Weintraub2024} is derived (via optimal control or geometric arguments) using a coordinate system centered on the (stationary) pursuer, while this paper provides a solution (via differential-game techniques) using a coordinate system centered on the evader.

\subsection{Game Solutions in Inertial Coordinate Frame}

Instead of the $\xi$-coordinates used to derive a solution of the game of degree, we can visualize solutions in the original coordinates $\xi_p,\xi_e \in \R^2$. Figure \ref{fig:orig_coordinates} shows the solution of the game of game of degree for three different initial conditions for the parameter selection $v_e=1.5$, $v_p=1$, $w_e=1$ and $\rho=1$.
\begin{figure}[htb!]
    \centering
    \begin{overpic}[width = 0.33\columnwidth]{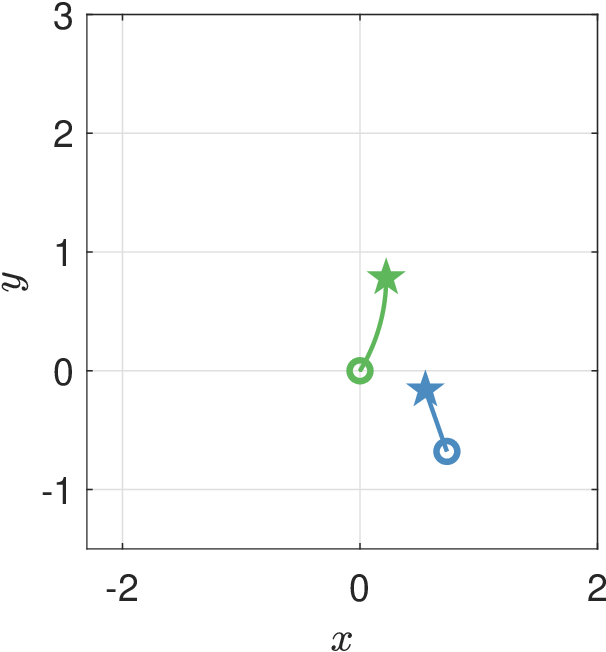}
    \put(40,40){\small{$\xi_e(0)$}}
    \put(52,28){\small{$\xi_p(0)$}}
    \put(42,55){\small{$\xi_e(T^*)$}}
    \put(70,42){\small{$\xi_p(T^*)$}}
    \end{overpic}
    \begin{overpic}[width = 0.33\columnwidth]{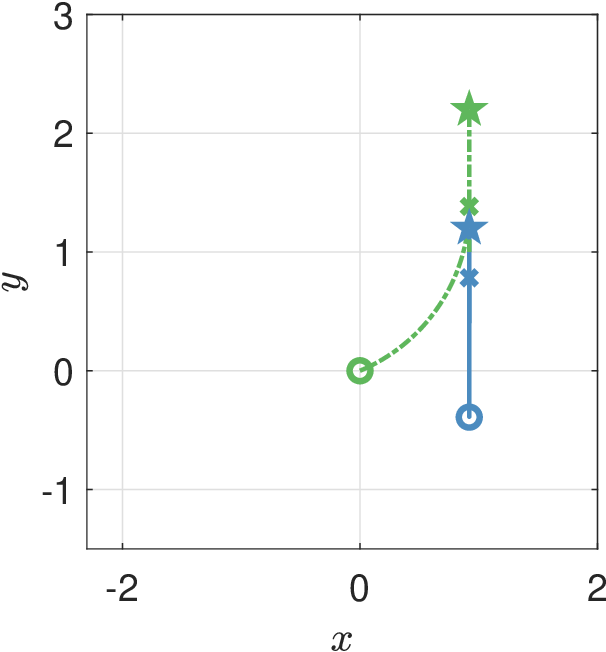}
    \put(40,40){\small{$\xi_e(0)$}}
    \put(62,28){\small{$\xi_p(0)$}}
    \put(53,80){\small{$\xi_e(T^*)$}}
    \put(75,62){\small{$\xi_p(T^*)$}}
    \end{overpic}
    \begin{overpic}[width = 0.33\columnwidth]{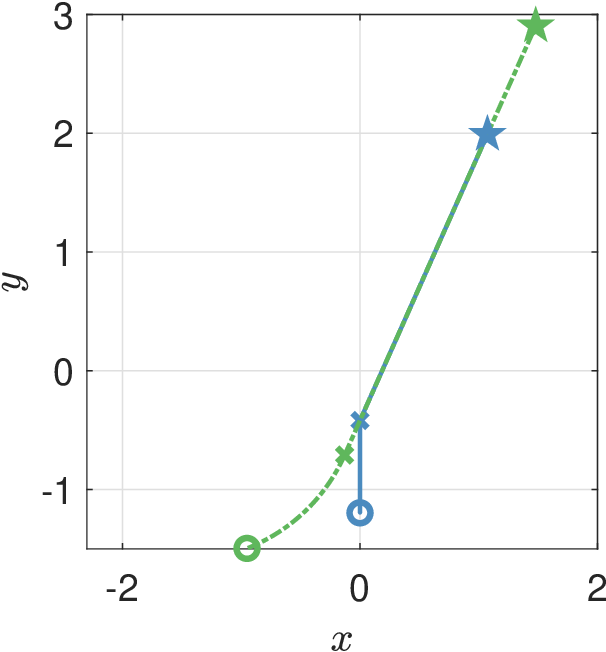}
    \put(20,17){\small{$\xi_e(0)$}}
    \put(58,20){\small{$\xi_p(0)$}}
    \put(65,92){\small{$\xi_e(T^*)$}}
    \put(56.5,78){\small{$\xi_p(T^*)$}}
    \end{overpic}
    \caption{Solutions of the game of degree in the inertial coordinate frame for different initial conditions.}
    \label{fig:orig_coordinates}
\end{figure}
Here, we show the solutions corresponding to the cases (A), (B) and (C) from left to right. Points where the strategy of the pursuer and the evader switch are indicated through $\times$ along the solution. This is the case when $\xi(t)=\xi_p(t)-\xi_e(t)$ reaches the $y$-axis. In Figure \ref{fig:orig_coordinates} on the right, the pursuer positions itself in front of the evader before the evader overtakes the pursuer to end the game.

\subsection{Optimal Value Function Compared to Pure Pursuit}

Finally, we examine how the optimal (Nash-equilibrium) strategies for the evader and pursuer derived here compare with a setting where the evader employs its optimal strategy against a pursuer that uses pure pursuit (i.e., the pursuer selects its heading angle by simply pointing at the evader).
This setting is similar to that considered in \cite{VonMoll2023} where the evader seeks to minimize its time under surveillance \emph{knowing} that the pursuer employs pure pursuit.
However, in contrast to \cite{VonMoll2023}, the evader we consider is playing its Nash-equilibrium strategy that guards against a maximizing pursuer.
To visualize the differences between when the pursuer uses its optimal (Nash-equilibrium) strategy and when it uses pure pursuit, we visualize the game value (stopping time) of the two strategies for two speed ratios in Figures \ref{fig:solution_pure_pursuit_1} and \ref{fig:solution_pure_pursuit_2}; that is, we visualize the time it takes until the evader is out of the surveillance region.
\begin{figure}[htb!]
    \centering
    \includegraphics[width = 0.49\columnwidth]{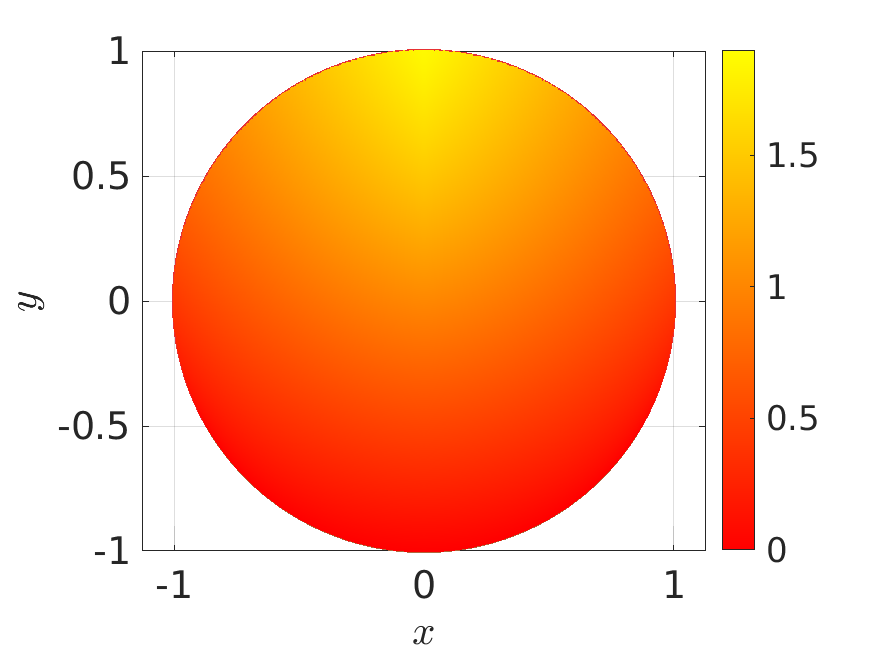}
    \includegraphics[width = 0.49\columnwidth]{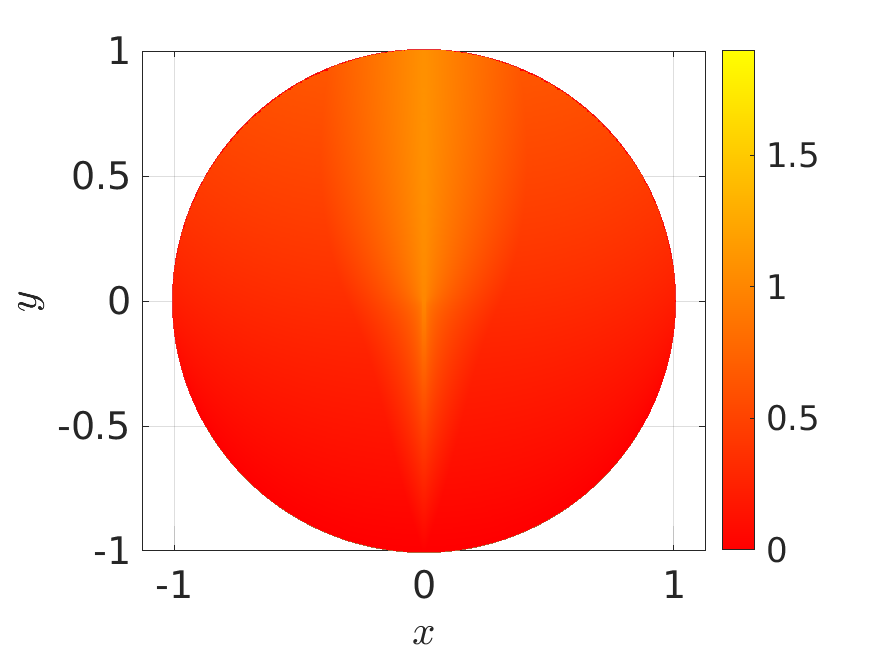}
    \caption{Visualization of the value function under optimal play \eqref{eq:problem_constraints} (left) and pure pursuit (right) using the parameter selection $\omega_e=1$, $v_p=1$, $\rho=1$, $v_e=2$. Here, the maximal values are given by $1.85$ (left) and $1.08$ (right), respectively.}
    \label{fig:solution_pure_pursuit_1}
\end{figure}
\begin{figure}[htb!]
    \centering
    \includegraphics[width = 0.49\columnwidth]{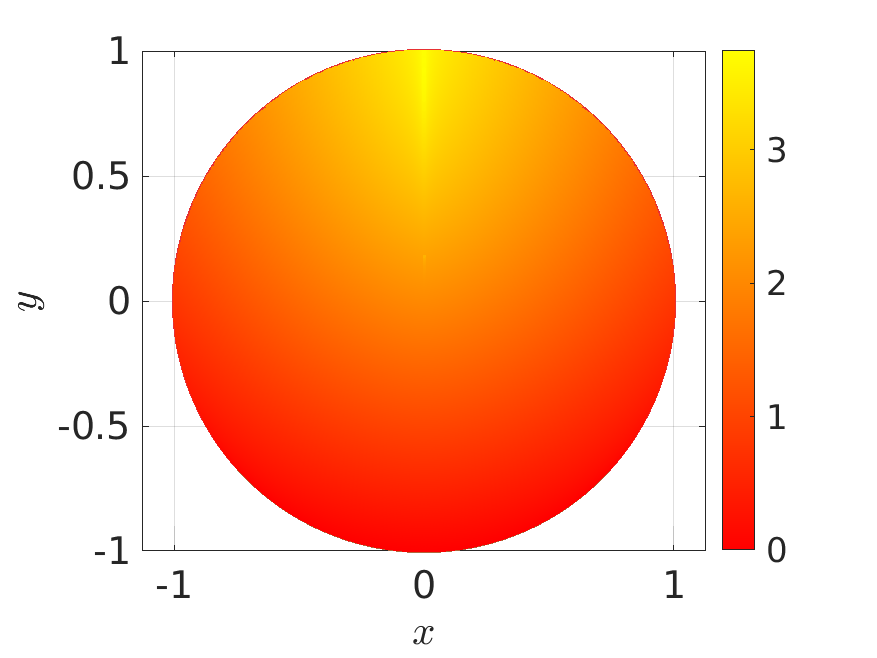}
    \includegraphics[width = 0.49\columnwidth]{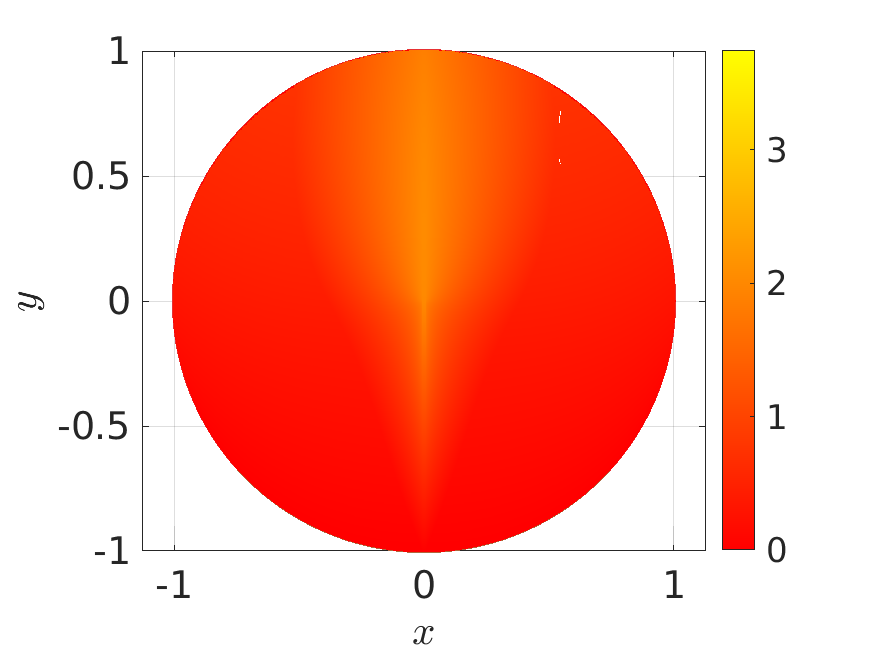}
    \caption{Visualization of the value function under optimal play \eqref{eq:problem_constraints} (left) and pure pursuit (right) using the parameter selection $\omega_e=1$, $v_p=1$, $\rho=1$, $v_e=1.5$. Here, the maximal values are given by $3.73$ (left) and $2.04$ (right), respectively.}
    \label{fig:solution_pure_pursuit_2}
\end{figure}

From Figures \ref{fig:solution_pure_pursuit_1} and \ref{fig:solution_pure_pursuit_2}, we see that when both the evader and pursuer employ their optimal (Nash-equilibrium) strategies, the time it takes for the evader to leave the surveillance region is almost always greater than that if the pursuer uses pure pursuit.
In particular, there is a significant difference in times when the game starts with the pursuer in front of the evader (i.e., when $\xi_0 \in \{\xi \in \mathcal{S}| x\geq 0 \}$) because pure pursuit leads to the pursuer inadvertently assisting the evader in leaving the surveillance region by not anticipating that the evader's strategy is to place the pursuer behind it which is easiest if the pursuer is closer.

\section{Conclusion}
\label{sec:conclusion}

We posed and solved a novel surveillance-evasion differential game, in both its game-of-kind and game-of-degree forms, between an agile pursuer and a turn-limited evader.
The solution of the game of degree is surprisingly complex in the case of a faster evader and slower pursuer, with the characterization of the solution depending on the speed ratio. 
Future work will focus on the extension of the game of degree and its solution in the case of multiple pursuers. 
Additionally, obstacles leading to obstructions in the game set will be considered.

\section*{Funding Sources}

The authors are supported by the ANU CIICADA lab and the United States Air Force Office of Scientific Research under Grant No. FA2386-24-1-4014. 

\begin{appendix}
    
\section*{Appendix}

\section{Derivation of Dynamics \eqref{eq:dynamics}} \label{sec:coordinate_transformation}

To derive the dynamics \eqref{eq:dynamics} we take the derivative of the expression \eqref{eq:coordinate_transformation_xi} leading to 
\begin{align}
    \dot{\xi} =
 \dot{\theta}_e    \left[\begin{array}{rr}
     -\sin(\theta_e) & -\cos(\theta_e)  \\
      \cos(\theta_e) & -\sin(\theta_e) 
 \end{array} \right] \left[ 
\begin{array}{c}
     x_p -x_e \\
     y_p -y_e
\end{array}
 \right]  + 
    \left[\begin{array}{rr}
     \cos(\theta_e) & -\sin(\theta_e)  \\
     \sin(\theta_e) & \cos(\theta_e) 
 \end{array} \right] \left[ 
\begin{array}{c}
     \dot{x}_p -\dot{x}_e \\
     \dot{y}_p -\dot{y}_e
\end{array}
 \right].  \label{eq:xi_derivation1}
\end{align}

With 
\begin{align}
\left[ 
\begin{array}{c}
     x_p -x_e \\
     y_p -y_e
\end{array} \right] =
     \left[\begin{array}{rr}
     \cos(\theta_e) & \sin(\theta_e)  \\
     -\sin(\theta_e) & \cos(\theta_e) 
 \end{array} \right] \left[ 
\begin{array}{c}
     x \\
     y
\end{array} \right],
\end{align}
which follows from \eqref{eq:coordinate_transformation_xi}, and with the dynamics \eqref{eq:cartesianDynamicsPursuer} and \eqref{eq:cartesianDynamics},
expression \eqref{eq:xi_derivation1} can be rewritten as
\begin{align*}
    \dot{\xi} &=
 \omega_e u_e    \left[\begin{array}{rr}
     -\sin(\theta_e) & -\cos(\theta_e)  \\
     \cos(\theta_e) & -\sin(\theta_e) 
 \end{array} \right]
 \left[\begin{array}{rr}
     \cos(\theta_e) & \sin(\theta_e)  \\
     -\sin(\theta_e) & \cos(\theta_e) 
 \end{array} \right]
 \left[ 
\begin{array}{c}
     x \\
     y
\end{array}
 \right]  + 
    \left[\begin{array}{rr}
     \cos(\theta_e) & -\sin(\theta_e)  \\
     \sin(\theta_e) & \cos(\theta_e) 
 \end{array} \right] \left[ 
\begin{array}{c}
     v_p \cos \theta_p -v_e \cos \theta_e  \\
     v_p \sin \theta_p - v_e \sin \theta_e
\end{array}
 \right] \\
  &=
 \omega_e u_e    \left[\begin{array}{rr}
     0 & -1  \\
     1 & 0 
 \end{array} \right]
 \left[ 
\begin{array}{c}
     x \\
     y
\end{array}
 \right] +
 \left[ 
\begin{array}{c}
v_p \cos \theta_e \sin \theta_p - v_p \cos \theta_p \sin \theta_e \\
- v_e \cos^2 \theta_e + v_p \cos \theta_p \cos \theta_e - v_e \sin^2 \theta_e  + v_p \sin \theta_p \sin \theta_e 
\end{array}
 \right] \\
&=     \left[ 
\begin{array}{c}
    -\omega_e u_e y \\
     \omega_e u_e x
\end{array}
 \right] +
 \left[ 
\begin{array}{c}
     v_p \sin(\theta_p-\theta_e)   \\
     v_p \sin(\theta_p-\theta_e) -v_e
\end{array}
 \right].
\end{align*}
Thus, with $u_p=\theta_p-\theta_e$ the dynamics \eqref{eq:f_dynamics} are recovered.

\section{Auxiliary Results} \label{ap:auxilliary_results}

In this section, we collect proofs of statements given in the paper. Since the proofs are not insightful, they are reported here for completeness but not in the main part of the paper. 

\begin{proofof} \emph{Lemma \ref{lem:strategy_on_y_axis}}:
As a first step, consider the vector
\begin{align}
    {\bf n}(\xi) = \frac{\xi}{|\xi|}, \qquad \xi \neq 0, 
\end{align}
i.e., a vector of length one pointing to the target set $\mathcal{C}$. Hence,  $f(\xi,u_e,u_p)^\top {\bf n}(\xi)$
indicates how fast $\xi(t)$ approaches (or goes away from) the target set. 
Now,
observe that for $u_e(t)=0$ the dynamics \eqref{eq:f_dynamics} reduce to 
\begin{align}
    f(\xi, 0, u_p)
    &=
    \begin{bmatrix}
        v_p \sin(u_p) \\
        - v_e + v_p \cos(u_p) 
    \end{bmatrix}.
\end{align}
From this expression we note that for $\xi_0\in \{\xi \in \mathcal{S}| x=0, \ y< 0 \}$ the best response of the pursuer to minimize 
\begin{align}
f(\xi_0,0,u_p)^\top {\bf n}(\xi_0) = (-v_e+v_p\cos(u_p)) \frac{y}{|y|} = - (-v_e+v_p)
\end{align}
is based on the strategy $u_p(t)=0$.
This leads to $f(\xi(t),0,0)^\top \left[\begin{smallmatrix}
    0 \\ -1
\end{smallmatrix}\right]=v_e-v_p$ and the solution satisfies \eqref{eq:solution_y_axis}.
In this case, the game terminates at time $T=(-\rho-y_0) (v_p-v_e) \in \R_{\geq 0}$ with $\xi(T)=[0,-\rho]^\top$.

Since the optimal response to $u_e=0$ is $u_p=0$, to have a solution $\xi(\cdot)$ leaving the $y$-axis, there needs to exist times $t_1,t_2\in \R_{\geq 0}$, $0 \leq t_1<t_2$ such  that $u_e(t)>0$ or $u_e(t)<0$ for all $t\in [t_1,t_2]$.
For the sake of a contradiction, assume that $u_e(t) >0$ for $t \in [t_1,t_2]$ and without loss of generality $y(t)<0$ for all $t\in[t_1,t_2]$ (since $y(0)<0$ and $y(t)$ is monotonically decreasing for $u_p=0$ and $u_e=0$). Then, with the pursuer's (potentially sub-optimal) strategy defined through the condition $x(t)=|\xi(t)| \sin(u_p(t)+\pi)$, $y(t)=|\xi(t)|\cos(u_p(t)+\pi)$ 
for all $t\in [t_1,t_2]$, 
it holds that 
\begin{align*}
    f(\xi(t), u_e(t), u_p(t))^\top {\bf n}(\xi(t))
    &=
    \begin{bmatrix}
        -\omega_e y u_e(t) + v_p \sin (u_p(t))\\
        \omega_e x u_e(t) - v_e + v_p \cos (u_p(t))
    \end{bmatrix}^\top {\bf n}(\xi(t)) \\
    &= v_p \sin (u_p(t)) \frac{x}{|\xi(t)|} + (-v_e + v_p \cos (u_p(t))) \frac{y(t)}{|\xi(t)|} \\
    &= v_p  \sin (u_p(t)) \sin (u_p(t)+\pi)  -v_e \frac{y(t)}{|\xi(t)|} + v_p \cos (u_p(t))\cos (u_p(t)+\pi) \\
    &= -v_p    -v_e \frac{y(t)}{|\xi(t)|} < -v_p    +v_e 
\end{align*}
for all $t\in[t_1,t_2]$ such that $x(t)\neq 0$. Hence, the strategy $u_e(t)>0$ is not optimal for the evader. A similar contradiction (using symmetry arguments as in Corollary \ref{cor:symmetry}) can be used to show that $u_e(t) <0$ for $t\in [t_1,t_2]$ is not optimal for the evader, which completes the proof.
\end{proofof}

\section{Characterization of Trajectories in Theorem \ref{thm:opt_strat_game3}} \label{ap:characterization_weird_trajectories}

In this part we give a detailed representation of the relevant parts of the functions visualized in Figure \ref{fig:solution_new_game2w}. For the derivation we assume that $\frac{v_e}{v_p}> \frac{1}{2}$ and we recall the definition of
$(\tilde{\tau}^\#,\tilde{y}_0^\#)$
through \eqref{eq:Gamma_conditions2} and \eqref{eq:Gamma_conditions}.
We focus on $\tilde{y}_0^\times \in (0,\tilde{y}_0^\#)$ and $\tilde{\tau}^\times \in (0,\tilde{\tau}^\#)$ with $\Gamma(\tilde{\tau}^\times,\tilde{y}_0^\times)=0$. With these definitions, for the case $u_{e_\tau}^*=-1$, the $x$-component of the solution solution \eqref{eq:sol_xP2} initialized through $y_0^\times = \frac{v_e}{\omega_e}\tilde{y}_0^\times$ changes its sign from negative to positive at time 
$\tau^\times = \frac{1}{\omega_e} \tilde{\tau}^\times.$ 
To extract the component of the trajectory \eqref{eq:sol_xP2} that is relevant for the game of degree we consider the coordinate transformation $\sigma=\tau-\tau^\times$
and we define
\begin{align}
    y_0^\times = y_0 \cos(\tau^\times \omega_e) + \frac{v_e}{\omega_e}  \sin(\tau^\times \omega_e ) - \tau^\times \omega_e v_p \cos(\tau^\times \omega_e ) \label{eq:y_0_equation}
\end{align}
denoting the $y$-component of the solution \eqref{eq:sol_xP2} at time $\tau^\times$ initialized on the $y$-axis $\xi_0=\left[\begin{smallmatrix}
    0\\
    y_0
\end{smallmatrix}\right]$, $y_0\in (0,\frac{v_e}{w_e} \tilde{y}^\#)$.

By construction, the $x$-component of the solution trajectory \eqref{eq:sol_xP2} is equal to zero at time $\tau^\times$ and we can thus define the new initial condition
\begin{align}
\xi_0^\times = \left[\begin{array}{c}
     0  \\
     y_0^\times 
\end{array} \right]
\end{align}
Solving \eqref{eq:y_0_equation} for $y_0$ provides the expression
\begin{align}
    y_0  = \frac{y_0^\times   - \frac{v_e}{\omega_e} \sin(\tau^\times \omega_e) + \tau^\times v_p \cos(\tau^\times \omega_e)}{\cos(\tau^\times \omega_e)}
\end{align}
which allows us to rewrite the solution \eqref{eq:sol_xP2} in terms of the time argument $\sigma$ and the initial condition $\xi_0^\times$:
\begin{align}
    &\tilde{\xi}_{\tau}(\sigma; \xi_0^\times,u_{e_\tau}^*,u_{p_\tau}^*(\cdot+\tau^\times)) = \xi(\sigma+\tau^\times;\xi_0,u_{e_\tau}^*,u_{p_\tau}^*(\cdot+\tau^\times)) \label{eq:sol_p3} \\
    &=
    \begin{bmatrix}
        u_{e_\tau}^* (y_0 \sin((\sigma+\tau^\times) \omega_e) + \frac{v_e}{\omega_e} - (\sigma+\tau^\times) v_p \sin((\sigma+\tau^\times) \omega_e) -\frac{v_e}{\omega_e}\cos((\sigma+\tau^\times) \omega_e)) \\
        y_0 \cos((\sigma+\tau^\times) \omega_e) + \frac{v_e}{\omega_e} \sin((\sigma+\tau^\times) \omega_e) - (\sigma+\tau^\times) v_p \cos((\sigma+\tau^\times) \omega_e)
    \end{bmatrix} \nonumber \\
    &=
    \begin{bmatrix}
        u_{e_\tau}^* (\frac{y_0^\times   - \frac{v_e}{\omega_e} \sin(\tau^\times \omega_e) + \tau^\times v_p \cos(\tau^\times \omega_e)}{\cos(\tau^\times \omega_e)} \sin((\sigma+\tau^\times) \omega_e) + \frac{v_e}{\omega_e} - (\sigma+ \tau^\times) v_p \sin((\sigma+\tau^\times) \omega_e) -\frac{v_e}{\omega_e}\cos((\sigma+\tau^\times) \omega_e)) \\
        \frac{y_0^\times   - \frac{v_e}{\omega_e} \sin(\tau^\times \omega_e) + \tau^\times v_p \cos(\tau^\times \omega_e)}{\cos(\tau^\times \omega_e)} \cos((\sigma+\tau^\times) \omega_e) + \frac{v_e}{\omega_e} \sin((\sigma+\tau^\times) \omega_e) - (\sigma+\tau^\times) v_p \cos((\sigma+\tau^\times) \omega_e)
    \end{bmatrix}. \nonumber
\end{align}
Similarly, the adjoint variables satisfy 
\begin{align*}
\tilde{\bp}_\tau(\sigma) =    \bp_\tau(\sigma+\tau^\times) 
&= \left[ \begin{array}{rr}
        \cos(-\omega_e u_e^* (\sigma+\tau^\times)) & -\sin(-\omega_e u_e^* (\sigma+\tau^\times)) \\
        \sin(-\omega_e u_e^* (\sigma+\tau^\times))  & \cos(-\omega_e u_e^*(\sigma+\tau^\times)) 
    \end{array}\right]
    \left[\begin{array}{c}
         0  \\
         \tfrac{1}{v_e-v_p}  
    \end{array} \right] \\
    &= \left[ \begin{array}{rr}
        \cos(-\omega_e u_{e_\tau}^* \sigma) & -\sin(-\omega_e u_{e_\tau}^* \sigma) \\
        \sin(-\omega_e u_{e_\tau}^* \sigma)  & \cos(-\omega_e u_{e_\tau}^* \sigma) 
    \end{array}\right] 
    \left[ \begin{array}{rr}
        \cos(-\omega_e u_{e_\tau}^* \tau^\times) & -\sin(-\omega_e u_{e_\tau}^* \tau^\times) \\
        \sin(-\omega_e u_{e_\tau}^* \tau^\times)  & \cos(-\omega_e u_{e_\tau}^* \tau^\times) 
    \end{array}\right]
    \left[\begin{array}{c}
         0  \\
         \tfrac{1}{v_e-v_p}  
    \end{array} \right]
\end{align*}
and we can define the new initial condition as
\begin{align}
        \tilde{\bp}_0=
        \tfrac{1}{v_e-v_p}
        \left[ \begin{array}{rr}
        -\sin(-\omega_e u_{e_\tau}^* \tau^\times) \\
         \cos(-\omega_e u_{e_\tau}^* \tau^\times) 
    \end{array}\right].
\end{align}

\end{appendix}

\bibliographystyle{abbrv}


\end{document}